\newcommand{\keywords}[1]{\par\addvspace\baselineskip
\noindent\keywordname\enspace\ignorespaces#1}
\spnewtheorem{observation}{Observation}{\bfseries}{\itshape}
\begin{document}
\mainmatter
\title{Tight Bounds on the Minimum Size of a Dynamic Monopoly}

\titlerunning{Tight Bounds on the Minimum Size of a Dynamic Monopoly}
\author{
    Ahad N. Zehmakan
}
\authorrunning{Ahad N. Zehmakan}
\institute{
Department of Computer Science, ETH Z\"urich, Switzerland\\
\email{abdolahad.noori@inf.ethz.ch}
}
\toctitle{}
\tocauthor{}
\maketitle
\begin{abstract}
Assume that you are given a graph $G=(V,E)$ with an initial coloring, where each node is black or white. Then, in discrete-time rounds all nodes simultaneously update their color following a predefined deterministic rule. This process is called two-way $r$-bootstrap percolation, for some integer $r$, if a node with at least $r$ black neighbors gets black and white otherwise. Similarly, in two-way $\alpha$-bootstrap percolation, for some $0<\alpha<1$, a node gets black if at least $\alpha$ fraction of its neighbors are black, and white otherwise. The two aforementioned processes are called respectively $r$-bootstrap and $\alpha$-bootstrap percolation if we require that a black node stays black forever.

For each of these processes, we say a node set $D$ is a dynamic monopoly whenever the following holds: If all nodes in $D$ are black then the graph gets fully black eventually. We provide tight upper and lower bounds on the minimum size of a dynamic monopoly.
\keywords{Dynamic monopoly, bootstrap percolation, threshold model, percolating set, target set selection.}
\end{abstract}
\section{Introduction}
Suppose for a graph $G$ by starting from an initial configuration (coloring), where each node is either black or white, in each round all nodes simultaneously update their color based on a predefined rule. This basic abstract model, which is commonly known as cellular automaton, has been studied extensively in different areas, like biology, statistical physics, and computer science to comprehend the behavior of various real-world phenomena. 

In two-way $r$-bootstrap percolation (or shortly $r$-BP), for some positive integer $r$, in each round a node gets black if it has at least $r$ black neighbors, and white otherwise. Furthermore, in two-way $\alpha$-bootstrap percolation ($\alpha$-BP), for some $0<\alpha<1$, a node gets black if at least $\alpha$ fraction of its neighbors are black, and white otherwise. (Notice that there should not be any confusion between two-way $\alpha$-BP and two-way $r$-BP since $r$ is an integer value larger than equal to 1 and $0<\alpha<1$.) These two processes are supposed to model social phenomena like opinion forming in a community, where black and white could represent respectively positive and negative opinion concerning a reform proposal or a new product. For instance in a social network, if a certain number/fraction of someone's connections have a positive opinion regarding a particular topic, s/he will adapt the same opinion, and negative otherwise.

$r$-BP and $\alpha$-BP are defined analogously, except we require that a black node stays unchanged. The main idea behind these two variants is to model monotone processes like rumor spreading in a society, fire propagation in a forest, and infection spreading among cells. For example, an individual gets informed of a rumor if a certain number/fraction of his/her friends are aware of it and stays informed forever, or a tree starts burning if a fixed number/fraction of the adjacent tress are on fire.

If we can try to convince a group of individuals to adopt a new product or innovation, for instance by providing them with free samples, and the goal is to trigger a large cascade of further adoptions, which set of individuals should we target and how large this set should be? This natural question brings up the well-studied concept of a dynamic monopoly. For each of the above four models, we say a node set is a dynamic monopoly, or shortly dynamo, whenever the following holds: If all nodes in the set are black initially then all nodes become black eventually.

Although the concept of a dynamo was studied earlier, e.g. by Balogh and Pete~\cite{balogh1998random}, it was formally defined and studied in the seminal work by Kempe, Kleinberg, and Tardos~\cite{kempe2003maximizing} and independently by Peleg~\cite{peleg1998size}, respectively motivated from viral marketing and fault-local mending in distributed systems. There is a massive body of work concerning the minimum size of a dynamo in different classes of graphs, for instance hypercube~\cite{balogh2010bootstrap,morrison2018extremal}, the binomial random graph~\cite{janson2012bootstrap,feige2017contagious,chang2012triggering,n2018opinion}, random regular graphs~\cite{balogh2007bootstrap,gartner2018majority}, and many others. Motivated from the literature of statistical physics, a substantial amount of attention has been devoted to the $d$-dimensional lattice, for instance see~\cite{balister2010random,flocchini2004dynamic,gartner2017color,gartner2017biased,jeger2018dynamic,zehmakan2018two}.

In the present paper, we do not limit ourselves to a particular class of graphs and aim to establish sharp lower and upper bounds on the minimum size of a dynamo in general case, in terms of the number of nodes and the maximum/minimum degree of the underlying graph. See Table~\ref{Table 1} for a summary.  

Some of the bounds are quite trivial. For example in (two-way) $r$-BP, $r$ is an obvious lower bound on the minimum size of a dynamo and it is tight since the complete graph $K_n$ has a dynamo of size $r$ (see Lemma~\ref{lemma} for the proof). However, some of the bounds are much more involved. For instance, an interesting open problem in this literature is whether the minimum size of a dynamo in two-way $\alpha$-BP for $\alpha>1/2$ is bounded by $\Omega(\sqrt{n})$ or not. We prove that this is true for $\alpha>\frac{3}{4}$. The case of $\frac{1}{2}<\alpha\le\frac{3}{4}$ is left for the future research.

The proof techniques utilized are fairly standard and straightforward (some new and some inspired from prior work), however they turn out to be very effective. The upper bounds are built on the probabilistic method. We introduce a simple greedy algorithm which always returns a dynamo. Then, we discuss if this algorithm visits the nodes in a random order, the expected size of the output dynamo matches our desired bound. For the lower bounds, we define a suitable potential function, like the number of edges whose endpoints have different colors in a configuration. Then, careful analysis of the behavior of such a potential function during the process allows us to establish lower bounds on the size of a dynamo. To prove the tightness of our results, we provide explicit graph constructions for which the minimum size of a dynamo matches our bounds.

A simple observation is that by adding an edge to a graph the minimum size of a dynamo in (two-way) $r$-BP does not increase (for a formal argument, please see Section~\ref{dynamo}). Thus, if one keeps adding edges to a graph, eventually it will have a dynamo of minimum possible size, i.e. $r$. Thus, it would be interesting to ask for the degree-based density conditions that ensure that a graph $G$ has a dynamo of size $r$. This was studied for $r$-BP, in the terms of the minimum degree, by Freund, Poloczek, and Reichman~\cite{freund2018contagious}. They proved that if the minimum degree $\delta(G)$ is at least $\lceil \frac{r-1}{r}n\rceil$, then there is a dynamo of size $r$ in $G$. Gunderson~\cite{gunderson2017minimum} showed that the statement holds even for $\delta(G) \ge \frac{n}{2}+r$, and this is tight up to an additive constant. We study the same question concerning the two-way variant and prove that if $\delta(G)\ge \frac{n}{2}+r$ then the graph includes $\Omega(n^r)$ dynamos of size $r$. Note that this statement is stronger than Gunderson's result. Firstly, we prove that there is a dynamo of size $r$ in two-way $r$-BP, which implies there is a dynamo of size $r$ in $r$-BP. Moreover, we show that there is not only one but also $\Omega(n^r)$ of such dynamos. It is worth to stress that our proof is substantially shorter and simpler.

We say a dynamo is monotone if it makes all nodes black monotonically, that is no black node ever becomes white during the process. In $r$-BP and $\alpha$-BP, any dynamo is monotone by definition, but in the two-way variants this is not necessarily true. Monotone dynamos also have been studied in different classes of graphs, see e.g.~\cite{flocchini2004dynamic,balister2010random,peleg1998size}. We provide tight bounds on the minimum size of a monotone dynamo in general graphs and also the special case of trees. In particular in two-way $\alpha$-BP for $\alpha>\frac{1}{2}$, we prove the tight lower bound of $\sqrt{\frac{\alpha}{1-\alpha}n}$ on the minimum size of a monotone dynamo in general case. Interestingly, this bound drastically increases if we limit the underlying graph to be a tree. A question which arises, is whether there is a relation among the minimum size of a monotone dynamo and the girth of the underlying graph or not? This is partially answered in~\cite{coja2015contagious} for $r$-BP. 

If all nodes in a dynamo are black, then black color will occupy the whole graph eventually. What if we only require the black color to survive in all upcoming rounds, but not necessarily occupy the whole graph? To address this question, we introduce two concepts of a stable set and immortal set. A non-empty node set $S$ is stable (analogously immortal) whenever the following hold: If initially $S$ is fully black, it stays black forever (respectively, black color survives forever). (See Section~\ref{pre} for formal definitions) Trivially, a stable set is also immortal, but not necessarily the other way around. Similar to dynamo, we provide tight bounds on the minimum size of a stable and an immortal set; see Table~\ref{Table 2}. In $r$-BP and $\alpha$-BP, a black node stays unchanged; thus, the minimum size of a stable/immortal set is equal to one. However, the situation is a bit more involved in two-way variants. Surprisingly, it turns out that in two-way $2$-BP the parity of $n$, the number of nodes in the underlying graph, plays a key role.

The layout of the paper is as follows. First, we set some basic definitions in Section~\ref{pre}. Then, the bounds for dynamos, monotone dynamos, and stable/immortal sets are presented respectively in Sections~\ref{dynamo},~\ref{monotone}, and~\ref{stable}.
\subsection{Preliminaries}
\label{pre}
Let $G=(V,E)$ be a graph that we keep fixed throughout. We always assume that $G$ is connected. For a node $v\in V$, $\Gamma(v):=\{u\in V: \{u,v\} \in E\}$ is the \emph{neighborhood} of $v$. For a set $S\subset V$, we define $\Gamma(S):=\bigcup_{v\in S}\Gamma(v)$ and $\Gamma_S(v):=\Gamma(v)\cap S$. Furthermore, $d(v):=|\Gamma(v)|$ is the \emph{degree} of $v$ and $d_S(v):=|\Gamma_S(v)|$. We also define $\Delta(G)$ and $\delta(G)$ to be respectively the maximum and the minimum degree in graph $G$.
(To lighten the notation, we sometimes shortly write $\Delta$ and $\delta$ where $G$ is clear form the context). 
In addition, for a node set $A\subset V$, we define the \emph{edge boundary} of $A$ to be $\partial(A):=\{\{u,v\}:v\in A\wedge u\in V\setminus A\}$. 

A \emph{configuration} is a function $\mathcal{C}:V\rightarrow\{b,w\}$, where $b$ and $w$ stand for black and white. For a node $v\in V$, the set $\Gamma_a^{\mathcal{C}}(v):=\{u\in \Gamma(v):\mathcal{C}(u)=a\}$ includes the neighbors of $v$ which have color $a\in\{b,w\}$ in configuration $\mathcal{C}$. We write $\mathcal{C}|_S=a$ for a set $S\subseteq V$ and color $a\in\{b,w\}$ if $\mathcal{C}(u)=a$ for every $u\in S$.

Assume that for a given initial configuration $\mathcal{C}_0$ and some integer $r\ge 1$, $\mathcal{C}_t(v)$, which is the color of node $v\in V$ in round $t\ge 1$, is equal to $b$ if $|\Gamma_b^{\mathcal{C}_{t-1}}(v)|\ge r$, and $\mathcal{C}_{t}(v)=w$ otherwise. This process is called \emph{two-way r-bootstrap percolation}. If we require that a black node to stay black forever, i.e., $\mathcal{C}_t(v)=w$ if and only if $|\Gamma_b^{\mathcal{C}_{t-1}}(v)|< r$ and $\mathcal{C}_{t-1}(v)=w$, then the process is called \emph{r-bootstrap percolation}.

\textbf{Assumptions.}  We assume that $r$ is fixed while we let $n$, the number of nodes in the underlying graph, tend to infinity. Note that if $d(v)<r$ for a node $v$, it never gets black in (two-way) $r$-BP, except it is initially black; thus, we always assume that $r\le \delta(G)$.

Furthermore, suppose that for a given initial configuration $\mathcal{C}_0$ and some fixed value $0<\alpha< 1$, $\mathcal{C}_t(v)=b$ for $v\in V$ and $t\ge 1$ if $|\Gamma_b^{\mathcal{C}_{t-1}}(v)|\ge \alpha |\Gamma(v)|$ and $\mathcal{C}_{t}(v)=w$ otherwise. This process is called \emph{two-way $\alpha$-bootstrap percolation}. If again we require a black node to stay unchanged, i.e., $\mathcal{C}_t(v)=w$ if and only if $|\Gamma_b^{\mathcal{C}_{t-1}}(v)|< \alpha |\Gamma(v)|$ and $\mathcal{C}_{t-1}(v)=w$, then the process is called \emph{$\alpha$-bootstrap percolation}.

For any of the above processes on a connected graph $G=(V,E)$, we define a node set $D$ to be a \emph{dynamic monopoly}, or shortly \emph{dynamo}, whenever the following holds: If $\mathcal{C}_{t}|_D=b$ for some $t\ge 0$, then $\mathcal{C}_{t'}|_V=b$ for some $t'\ge t$. Furthermore, assume that for a non-empty node set $S$, if $\mathcal{C}_{t}|_S=b$ for some $t\ge 0$, then $\mathcal{C}_{t'}|_S=b$ for any $t'\ge t$; then, we say $S$ is a \emph{stable set}. Finally, a node set $I$ is an \emph{immortal set} when the following is true: If $\mathcal{C}_{t}|_I=b$ for some $t\ge 0$, then for any $t'\ge t$ there exists a node $v\in V$ so that $\mathcal{C}_{t'}(v)=b$. 

For a graph $G$ we define the following notations:
\begin{itemize}
\item $MD_r(G):=$ The minimum size of a dynamo in $r$-BP.
\item $\overleftarrow{MD}_r(G):=$ The minimum size of a dynamo in two-way $r$-BP.
\item $MS_r(G):=$ The minimum size of a stable set in $r$-BP.
\item $\overleftarrow{MS}_r(G):=$ The minimum size of a stable set in two-way $r$-BP. 
\item $MI_r(G):=$ The minimum size of an immortal set in $r$-BP.
\item $\overleftarrow{MI}_r(G):=$ The minimum size of an immortal set in two-way $r$-BP. 
\end{itemize}
We analogously define $MD_{\alpha}(G)$, $MS_{\alpha}(G)$, $MI_{\alpha}(G)$ for $\alpha$-BP and $\overleftarrow{MD}_{\alpha}(G)$, $\overleftarrow{MS}_{\alpha}(G)$, and $\overleftarrow{MI}_{\alpha}(G)$ for two-way $\alpha$-BP.


As a warm-up, let us compute some of these parameters for some specific class of graphs in Lemma~\ref{lemma}, which actually come in handy several times later for arguing the tightness of our bounds.
\begin{lemma}
\label{lemma}
For complete graph $K_n$, $MD_r(K_n)=\overleftarrow{MD_r}(K_n)=r$ and $MD_{\alpha}(K_n)\ge\lceil\alpha n\rceil-1$. Furthermore, for an $r$-regular graph $G=(V,E)$ and $r\ge 2$, $\overleftarrow{MD_r}(G)=n$.
\end{lemma}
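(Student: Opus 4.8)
The plan is to treat the three claims separately, since each rests on a different structural feature of the complete or regular graph.

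\textbf{The two complete-graph $r$-claims.} For the lower bounds I would use that in $K_n$ every white node is adjacent to \emph{all} black nodes, so a white node sees exactly $|B|$ black neighbours, where $B$ is the current black set. If $|D|=k<r$, then in $r$-BP the black set never grows (each white node sees $k<r$ black neighbours and the black set is monotone), while in two-way $r$-BP a single round turns every node white (each node of $D$ sees $k-1$ black neighbours and each other node sees $k$, both below $r$); either way the graph never becomes fully black, giving $MD_r(K_n),\overleftarrow{MD}_r(K_n)\ge r$. For the upper bound take any $D$ with $|D|=r$. In $r$-BP every white node sees $r$ black neighbours in the first round and turns black, so one round suffices. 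In two-way $r$-BP the first round instead \emph{swaps} the colours: each of the $r$ nodes of $D$ sees only $r-1$ black neighbours and turns white, while each of the $n-r$ remaining nodes sees $r$ black neighbours and turns black; a second round then makes every node black provided $n\ge 2r+1$, which holds as $n\to\infty$. This two-round bookkeeping is the only computation and it is routine.

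\textbf{The fractional complete-graph bound.} Here I would again exploit the symmetry of $K_n$: each node has degree $n-1$, and a white node's black-neighbour count equals the total number $k$ of black nodes. Hence the dynamics is all-or-nothing: if $k<\alpha(n-1)$ no white node ever turns black, so the configuration is frozen, whereas if $k\ge\alpha(n-1)$ every white node turns black at once. Consequently any dynamo must start with $k\ge\alpha(n-1)$ black nodes, i.e.\ $|D|\ge\lceil\alpha(n-1)\rceil\ge\lceil\alpha n\rceil-1$, where the last inequality uses $0<\alpha<1$ (so $\alpha n-\alpha>\alpha n-1$).

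\textbf{The regular-graph claim.} This is the conceptual heart. The key observation is that in an $r$-regular graph every node has degree exactly $r$, so in two-way $r$-BP a node is black in round $t$ iff \emph{all} of its neighbours were black in round $t-1$; equivalently, a node is white in round $t$ iff it has at least one white neighbour in round $t-1$. Writing $W_t$ for the set of white nodes in round $t$, this says precisely that $W_t=\Gamma(W_{t-1})$. Since every node has $r\ge 1$ neighbours, $\Gamma(W)\neq\emptyset$ whenever $W\neq\emptyset$, so by induction $W_t\neq\emptyset$ for all $t$ as soon as $W_0\neq\emptyset$. Therefore, starting from any proper subset $D\subsetneq V$ coloured black (so $W_0=V\setminus D\neq\emptyset$), a white node always remains and the graph is never fully black; hence $D$ is not a dynamo, while $V$ itself trivially is, giving $\overleftarrow{MD}_r(G)=n$.

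\textbf{Main obstacle.} The genuinely non-routine step is spotting the clean recursion $W_t=\Gamma(W_{t-1})$ in the regular case, after which the positive-degree assumption finishes the argument immediately. The only other point requiring care is the two-way $K_n$ upper bound, where the initial colour swap must not be mistaken for failure before the second round completes the convergence.
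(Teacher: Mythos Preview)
Your proof is correct and follows essentially the same approach as the paper's: the same symmetry observations in $K_n$, the same two-round swap for the two-way upper bound, and the same ``one white neighbour forces white'' observation in the $r$-regular case. Your formulation $W_t=\Gamma(W_{t-1})$ is a slightly cleaner global phrasing of what the paper states locally (``all nodes in $\Gamma(v)$ will be white''), and your lower-bound argument for $MD_r(K_n)$ is in fact more accurate than the paper's, which loosely says the configuration becomes fully white even in the monotone model.
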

\begin{proof}
Firstly, $r\le MD_r(K_n)$ because by starting from a configuration with less than $r$ black nodes in $r$-BP, clearly in the next round all nodes will be white. Secondly, $\overleftarrow{MD_r}(K_n)\le r$ because from a configuration with $r$ black nodes in two-way $r$-BP, in the next round all the $n-r$ white nodes turn black and after one more round all nodes will be black because $n-r$ is at least $r+1$ (recall that we assume that $r$ is fixed while $n$ tends to infinity). By these two statements and the fact that $MD_r(K_n)\le\overleftarrow{MD_r}(K_n)$ (this is true since a dynamo in two-way $r$-BP is also a dynamo in $r$-BP), we have $MD_r(K_n)=\overleftarrow{MD_r}(K_n)=r$. In two-way $\alpha$-BP on $K_n$, by starting with less than $\lceil \alpha (n-1)\rceil$ black nodes, the process gets fully white in one round, which implies that $MD_{\alpha}(K_n)\ge\lceil\alpha n\rceil-1$. (The interested reader might try to find the exact value of $MD_{\alpha}(K_n)$ as a small exercise.) 

For two-way $r$-BP on an $r$-regular graph $G$, consider an arbitrary configuration with at least one white node, say $v$. Trivially, in the next round all nodes in $\Gamma(v)$ will be white. Thus, by starting from any configuration except the fully black configuration, the process never gets fully black. This implies that $\overleftarrow{MD}_r(G)=n$. (We exclude $r=1$ because a $1$-regular graph is disconnected for large $n$.)
\qed
\end{proof}

\section{Lower and Upper Bounds}
\subsection{Dynamos}
\label{dynamo}
In this section, we provide lower and upper bounds on the minimum size of a dynamo in $\alpha$-BP (Theorem~\ref{thm1}), two-way $\alpha$-BP (Theorem~\ref{thm2}), $r$-BP (Theorem~\ref{thm3}), and two-way $r$-BP (Theorem~\ref{thm4}). See Table~\ref{Table 1} for a summary. Furthermore in Theorem~\ref{thm5}, we present sufficient minimum degree condition for a graph to have a dynamo of size $r$ in two-way $r$-BP.
\vspace{-0.4cm}
\begin{table}
\centering
\caption{The minimum size of a dynamo. All bounds are tight up to an additive constant, except some of the bounds for two-way $\alpha$-BP.}
\label{Table 1}
\begin{tabular}{ |c|c|c| } 
 \hline
Model& Lower Bound &Upper Bound\\
 \hline
 $\alpha$-BP   &1 & $(\frac{\delta+\frac{1}{\alpha}}{\delta+1})\ \alpha n$\\
 \hline
 Two-way $\alpha$-BP $\ \alpha>\frac{3}{4}$&   $2\alpha\sqrt{n}-1$  & $n$\\
 \hline
 Two-way $\alpha$-BP $\ \alpha\le\frac{3}{4}$&   $1$  & $n$\\
 \hline
 $r$-BP   & $r$   &$(\frac{r}{1+\delta})\ n$\\
 \hline
 Two-way $r$-BP$\ r\ge 2$&   $r$  &$n$ \\
 \hline
 Two-way $r$-BP $\ r=1$ & 1 & 2\\
 \hline
\end{tabular}
\end{table}
\vspace{-0.5cm}
\begin{theorem}
\label{thm1}
For a graph $G=(V,E)$, $1\le MD_{\alpha}(G)\le (\frac{\delta+\frac{1}{\alpha}}{\delta+1})\ \alpha n$. 
\end{theorem}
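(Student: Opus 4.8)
The lower bound $MD_{\alpha}(G)\ge 1$ is immediate: the all-white configuration is a fixed point of $\alpha$-BP, so the empty set cannot be a dynamo and any dynamo has size at least one. The substance is the upper bound, and following the strategy announced in the introduction I would prove it by the probabilistic method applied to a random-order greedy construction. The plan is to draw a uniformly random permutation $\pi$ of $V$ and keep a vertex in the seed set precisely when too few of its neighbors precede it to switch it on; formally, set
\[
D:=\{\,v\in V:\ |\{u\in\Gamma(v):\pi(u)<\pi(v)\}|<\lceil\alpha\, d(v)\rceil\,\}.
\]

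First I would check that $D$ is a dynamo for \emph{every} permutation $\pi$. Listing the vertices in increasing $\pi$-order as $w_1,\dots,w_n$, I argue by induction that each $w_i$ eventually becomes black: if $w_i\in D$ it is black from the outset, and otherwise at least $\lceil\alpha\, d(w_i)\rceil$ of its neighbors precede it, all already black by the inductive hypothesis, so $w_i$ has at least $\alpha\, d(w_i)$ black neighbors and turns black. Since $\alpha$-BP is monotone, this asynchronous activation in $\pi$-order reaches the same final black set as the synchronous process (the closure is order-independent), so the whole graph becomes black and $D$ is indeed a dynamo.

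Next I would bound $\mathbb{E}|D|$. The key observation is that, among the $d(v)+1$ vertices $\{v\}\cup\Gamma(v)$, the rank of $v$ is uniform, so the number of neighbors of $v$ preceding it is uniform on $\{0,1,\dots,d(v)\}$; hence $\Pr[v\in D]=\lceil\alpha\, d(v)\rceil/(d(v)+1)$. By linearity of expectation, $\mathbb{E}|D|=\sum_{v}\lceil\alpha\, d(v)\rceil/(d(v)+1)$. Using $\lceil\alpha\, d\rceil<\alpha\, d+1$ together with the fact that $g(d):=(\alpha\, d+1)/(d+1)$ is decreasing in $d$ (its derivative has sign $\alpha-1<0$), each summand is at most $g(\delta)=(\alpha\delta+1)/(\delta+1)=\big(\tfrac{\delta+1/\alpha}{\delta+1}\big)\alpha$ because $d(v)\ge\delta$. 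Summing over the $n$ vertices gives $\mathbb{E}|D|\le\big(\tfrac{\delta+1/\alpha}{\delta+1}\big)\alpha n$, and since $|D|$ is integer-valued there exists a realization of $\pi$ yielding a concrete dynamo of at most this size, establishing the claimed bound on $MD_{\alpha}(G)$.

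I expect the dynamo-verification step to be routine but in need of a careful statement (invoking monotonicity and order-independence of the closure), while the genuinely useful ideas are the uniform-rank computation of $\Pr[v\in D]$ and the reduction from per-vertex degrees to the minimum degree $\delta$ via monotonicity of $g$; the only mild delicacy is the handling of the ceiling, which the strict inequality $\lceil\alpha\, d\rceil<\alpha\, d+1$ resolves cleanly.
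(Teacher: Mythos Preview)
Your proposal is correct and essentially identical to the paper's proof: both define the seed set via a random labeling/permutation by keeping a vertex when fewer than $\lceil\alpha\,d(v)\rceil$ neighbors precede it, verify by induction in label order that this set is a dynamo, compute $\Pr[v\in D]=\lceil\alpha\,d(v)\rceil/(d(v)+1)$ from the uniform rank of $v$ among $\{v\}\cup\Gamma(v)$, and bound each term by $(\alpha\delta+1)/(\delta+1)$ using monotonicity in $d$. The only cosmetic difference is that the paper phrases the dynamo check directly in the synchronous process (``all nodes with label $\le t$ are black by round $t$''), making your appeal to order-independence of the closure unnecessary, though not incorrect.
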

In Theorem~\ref{thm1}, the lower bound is trivial and the upper bound is proven by applying an idea similar to the one from Theorem 2.1 in~\cite{reichman2012new}.
\begin{proof}
By applying the probabilistic method, we prove the upper bound. Consider an arbitrary labeling $L:V\rightarrow [n]$, which assigns a unique label from $1$ to $n$ to each node. Define the set $D_L:=\{v \in V:|\{u\in\Gamma(v):L(u)<L(v)\}|<\alpha d(v)\}$. We claim that $D_L$ is a dynamo in $\alpha$-BP, irrespective of $L$. More precisely, we show that by starting from a configuration where $D_L$ is fully black, in the $t$-th round for $t\geq 1$ all nodes with label $t$ or smaller are black. This immediately implies that $D_L$ is a dynamo since in at most $n$ rounds the graph gets fully black. The node with label $1$ is in $D_L$ by definition; thus, it is black in the first round. As the induction hypothesis, assume that all nodes with label $t$ or smaller are black in the $t$-th round for some $t\ge 1$. If node $v$ with label $t+1$ is in $D_L$ then it is already black; otherwise, it has at least $\alpha d(v)$ neighbors with smaller labels, which are black by the induction hypothesis. Thus, $v$ will be black in the $(t+1)$-th round and all nodes with smaller labels also will stay black.

Assume that we choose a labeling $L$ uniformly at random among all $n!$ possible labellings. Let us compute the expected size of $D_L$.
\begin{align*}
& \mathbb{E}[|D_L|]= \sum_{v\in V} Pr[\text{node $v$ is in $D_L$}]=\sum_{v\in V}\frac{\lceil \alpha d(v)\rceil}{d(v)+1}\le \sum_{v\in V}\frac{\alpha d(v)+1}{d(v)+1}\\
& \le\sum_{v\in V}\frac{\alpha\delta+1}{\delta+1}=(\frac{\delta+\frac{1}{\alpha}}{\delta+1})\alpha n.
\end{align*}
Therefore, there exists a labeling $L$ with $|D_L|\le (\frac{\delta+\frac{1}{\alpha}}{\delta+1})\alpha n$, which implies that there exists a dynamo of this size. \qed
\end{proof}
\textbf{Tightness.} The lower bound is tight since in the star $S_n$, a tree with one internal node and $n$ leaves, the internal node is a dynamo, irrespective of $0<\alpha<1$. Furthermore for the complete graph $K_n$, $MD_{\alpha}(K_n)\ge \lceil \alpha n\rceil-1$ (see Lemma~\ref{lemma}) and our upper bound is equal to $\alpha n+1-\alpha$ (by plugging in the value $\delta=n-1$). Thus, the upper bound is tight up to an additive constant.
\begin{theorem}
\label{thm2}
For a graph $G=(V,E)$,

(i) $2\alpha\sqrt{n}-1\le \overleftarrow{MD}_{\alpha}(G)\le n$ for $\alpha>\frac{3}{4}$

(ii) $1\le \overleftarrow{MD}_{\alpha}(G)\le n$ for $\alpha\le \frac{3}{4}$.
\end{theorem}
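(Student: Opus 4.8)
The plan is to dispose of the two upper bounds and the lower bound in part (ii) immediately, and then concentrate on the lower bound in part (i), which is the only substantial claim. For the upper bounds, I would observe that the all-black configuration is a fixed point of two-way $\alpha$-BP: every node then has all of its neighbors black, hence certainly an $\alpha$-fraction, and stays black. Thus $V$ itself is a dynamo and $\overleftarrow{MD}_{\alpha}(G)\le n$ in both regimes. The lower bound $\overleftarrow{MD}_{\alpha}(G)\ge 1$ in (ii) is immediate, since a dynamo is non-empty by definition.

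For the lower bound in part (i) I would track a potential along the trajectory that starts from $D$ black and ends at the all-black configuration, following the edge-boundary idea highlighted in the introduction. Write $B_t$ for the black set in round $t$, so that $B_0=D$ and $B_T=V$ for some $T$, and let $\phi_t=|\partial(B_t)|$ be the number of bichromatic edges. Two structural facts come straight from the update rule: every newly black vertex $x\in B_{t+1}\setminus B_t$ sends at least $\alpha d(x)$ edges into $B_t$, all of which are counted by $\phi_t$; and every vertex that is black in round $t+1$ has at most a $(1-\alpha)$-fraction of its neighbors white in round $t$. Applied to the last step $B_{T-1}\to V$, the second fact says that the final white set $W=V\setminus B_{T-1}$ is internally sparse ($d_W(w)\le(1-\alpha)d(w)$ for every $w\in W$) while $\phi_{T-1}=e(W,B_{T-1})\ge\alpha\,\mathrm{vol}(W)$; applied to a generic step it controls how fast black can spread in terms of the current boundary $\phi_t$.

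The heart of the argument is then to convert these per-step inequalities into the quadratic relation $n\le\bigl(\tfrac{|D|+1}{2\alpha}\bigr)^{2}$, i.e.\ $|D|\ge 2\alpha\sqrt n-1$. The mechanism I expect is that a seed set of size $k$ can feed only a bounded amount of boundary into the process, while each vertex that is ever recruited must spend an $\alpha$-fraction of boundary on its own degree; charging the recruitment cost against the available boundary, and then using a Cauchy--Schwarz/isoperimetric step to pass from a volume/edge estimate to a count of vertices, should produce the $\sqrt n$ scaling with the constant $2\alpha$. The threshold $\alpha>\tfrac34$ is exactly where the key inequality closes: in this regime $2\alpha>\tfrac32$ forces the boundary lost to deactivations to be dominated by the boundary gained through activations, so the potential behaves monotonically enough to be summed.

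The main obstacle is twofold. First, because the process is two-way and all vertices update simultaneously, $\phi_t$ is not monotone — black vertices may die and later be revived — so the accounting must isolate the contributions of activated and deactivated vertices at each step and show that their net effect is controlled precisely when $\alpha>\tfrac34$. Second, and more delicate, is making the final estimate uniform over all graphs: a naive potential bounds the \emph{volume} of $D$ rather than its \emph{size}, which is too weak on highly irregular graphs, since a single high-degree seed already carries large volume (indeed the star shows that one seed can percolate the one-way process). The argument must therefore genuinely bound the \emph{number} of seeds, and this is where the $\sqrt n$ barrier comes from; I would corroborate it by exhibiting the matching construction that attains $2\alpha\sqrt n-1$ up to an additive constant, thereby also explaining why this method does not reach below $\alpha=\tfrac34$.
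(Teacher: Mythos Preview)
Your sketch correctly identifies the trivial parts and that the substance lies in the lower bound of (i), and you are right that an edge-boundary potential is the tool. However, the plan as written has a genuine gap that you yourself flag but do not close: with $B_t$ defined as the black set at time $t$, neither $B_t$ nor $\phi_t=|\partial(B_t)|$ is monotone, and the ``net effect'' accounting you allude to (activations versus deactivations) does not go through in any clean way. In particular, your stated reason for the threshold, that ``$2\alpha>\tfrac32$ forces the boundary lost to deactivations to be dominated by the boundary gained through activations,'' is not the mechanism that makes the argument close, and no Cauchy--Schwarz or isoperimetric step is needed.

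The missing idea is to change what you track. The paper sets
\[
B_t:=\{v: \mathcal{C}_{t'-1}(v)=\mathcal{C}_{t'}(v)=b \text{ for some } t'\le t\},
\]
the set of nodes that have been black in \emph{two consecutive rounds} by time $t$. This set is monotone by construction, and a node $v\in B_{t+1}\setminus B_t$ has at least an $\alpha$-fraction of its neighbours black at time $t-1$ and at least an $\alpha$-fraction black at time $t$; by pigeonhole, at least a $(2\alpha-1)$-fraction of its neighbours lie in $B_t$. The condition $\alpha>\tfrac34$ is exactly $2\alpha-1>\tfrac12$, so every newly added node has strictly more neighbours inside $B_t$ than outside, whence the potential $\Phi_t:=|B_t|+|\partial(B_t)|$ is non-increasing. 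A direct estimate of $\Phi_1$ (using $B_1\subseteq D$ and $d_{V\setminus D}(v)\le\tfrac{1-\alpha}{\alpha}(|D|-1)$ for $v\in B_1$) gives $\Phi_1\le\tfrac{1}{4\alpha^2}(|D|+2\alpha-1)^2$ by optimizing a quadratic in $|B_1|$; since $\Phi_{T+1}=n$ once the graph is fully black, one reads off $|D|\ge 2\alpha\sqrt{n}-1$. Your proposal is on the right track but will not succeed until you replace the instantaneous black set by this ``twice-in-a-row'' set, which simultaneously removes the non-monotonicity obstacle and converts a volume bound into a vertex-count bound without any Cauchy--Schwarz step.
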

\begin{proof}
We prove the lower bound of $2\alpha\sqrt{n}-1$; all other bounds are trivial. Let $D$ be an arbitrary dynamo in two-way $\alpha$-BP for $\alpha>\frac{3}{4}$. Consider the initial configuration $\mathcal{C}_0$ where $\mathcal{C}_0|_D=b$ and $\mathcal{C}_0|_{V\setminus D}=w$. Furthermore, we define for $t\ge 1$ $B_t:=\{v\in V: \mathcal{C}_{t'-1}(v)=\mathcal{C}_{t'}(v)=b\ \text{for some}\ t'\le t\}$
to be the set of nodes which are black in two consecutive rounds up to the $t$-th round.

Now, we define the potential function $\Phi_t:=|B_t|+|\partial(B_t)|$ to be the number of nodes in $B_t$ plus the number of edges with exactly one endpoint in $B_t$. Since $D$ is a dynamo, there exists some $T\ge 1$ such that $\mathcal{C}_T|_V=b$, which implies that $\mathcal{C}_{T+1}|_V=b$. Thus, we have $\Phi_{T+1}=|B_{T+1}|+|\partial(B_{T+1})|=n+0=n$. We prove that $\Phi_1\le \frac{1}{4\alpha^2}(|D|+(2\alpha-1))^2$ and $\Phi_{t+1}\le \Phi_{t}$ for any $t\ge 1$. Therefore, $n=\Phi_{T+1}\le \Phi_1\le \frac{1}{4\alpha^2}(|D|+(2\alpha-1))^2$, which results in
\[
4\alpha^2 n\le (|D|+(2\alpha-1))^2\Rightarrow 2\alpha\sqrt{n}-2\alpha+1\le |D|\xRightarrow{2\alpha< 2} 2\alpha\sqrt{n}-1\le |D|.
\]
Firstly, we prove that $\Phi_{t+1}\le \Phi_{t}$ for any $t\ge 1$. Define the set $B:=B_{t+1}\setminus B_t$. If $B=\emptyset$, then $B_t=B_{t+1}$, because by definition $B_t\subseteq B_{t+1}$, which implies that $\Phi_{t+1}=\Phi_{t}$. Thus, assume that $B\ne\emptyset$. A node $v\in B$ is black in both rounds $t$ and $t+1$, which means it has at least $\alpha d(v)$ black neighbors in the $(t-1)$-th round and $\alpha d(v)$ black neighbors in the $t$-th round. Thus by the pigeonhole principle, at least $2\alpha-1$ fraction of its neighbors are black in both rounds $t-1$ and $t$. In other words, at least $2\alpha-1$ fraction of its neighbors are in $B_t$. Note that $2\alpha-1>\frac{1}{2}$ for $\alpha>\frac{3}{4}$. Therefore, for each node $v\in B$ more than half of its neighbors are in $B_t$. This implies that $|\partial(B_{t+1})|\le |\partial(B_t)|- |B|$. Thus,
\[
\Phi_{t+1}=|B_{t+1}|+|\partial(B_{t+1})|\le |B_t|+|B|+|\partial(B_t)|-|B|=|B_t|+|\partial(B_t)|=\Phi_t.
\]
It remains to show that $\Phi_1\le \frac{1}{4\alpha^2}(|D|+(2\alpha-1))^2$. Recall that for a node $v\in B_1$, $d_{V\setminus D}(v)$ and $d_{D\setminus B_1}(v)$ are the number of edges that $v$ shares with nodes in $V\setminus D$ and $D\setminus B_1$, respectively. In addition, note that $\mathcal{C}_0(v)=\mathcal{C}_1(v)=b$ by the definition of $B_1$. Since all nodes in $V\setminus D$ are white in $\mathcal{C}_0$ and $v$ must be black in $\mathcal{C}_1$, $\frac{d_{V\setminus D}(v)}{d(v)}\le (1-\alpha)$. Furthermore, since $v$ has at most $|D|-1$ neighbors in $D$ (this is true because $\mathcal{C}_0(v)=b$, that is $v\in D$), we have that $|D|-1+d_{V\setminus D}(v)\ge d(v)$. Thus, $\frac{d_{V\setminus D}(v)}{|D|-1+d_{V\setminus D}(v)}\le (1-\alpha)$, which implies that $d_{V\setminus D}(v)\le \frac{1-\alpha}{\alpha}(|D|-1)$. Moreover, since $B_1\subseteq D$, $d_{D\setminus B_1}(v)\le |D|-|B_1|$. Putting the last two statements together outputs $d_{V\setminus B_1}(v)\le |D|-|B_1|+\frac{1-\alpha}{\alpha}(|D|-1)=\frac{1}{\alpha}|D|-|B_1|+\frac{\alpha-1}{\alpha}$. Therefore 
\begin{align*}
& \Phi_1= |B_1|+|\partial(B_1)|\le|B_1|+ |B_1|\cdot (\frac{1}{\alpha}|D|-|B_1|+\frac{\alpha-1}{\alpha})=\\
& (\frac{1}{\alpha}|D|+\frac{2\alpha-1}{\alpha})|B_1|-|B_1|^2=\frac{|D|+(2\alpha-1)}{\alpha}|B_1|-|B_1|^2.
\end{align*}
The upper bound is maximized for $|B_1|=\frac{1}{2}(\frac{|D|+(2\alpha-1)}{\alpha})$. Thus, $\Phi_1\le \frac{1}{4\alpha^2}(|D|+(2\alpha-1))^2$.
\qed  
\end{proof}
\textbf{Tightness.} Let us first consider part (i). We show that there exist $n$-node graphs with dynamos of size $k=\sqrt{\frac{\alpha}{1-\alpha}n}$ for $\alpha> \frac{3}{4}$ (actually our construction works also for $\alpha\ge 1/2$), which demonstrates that our bound is asymptotically tight. Consider a clique of size $k$ and attach $\frac{n}{k}-1$ distinct leaves to each of its nodes. The resulting graph has $k+k(\frac{n}{k}-1)=n$ nodes. Consider the initial configuration $\mathcal{C}_0$ in which the clique is fully black and all other nodes are white. In $\mathcal{C}_1$, all the leaves turn black because their neighborhood is fully black in $\mathcal{C}_0$. Furthermore, each node $v$ in the clique stays black since it has $k-1$ black neighbors and $k-1\ge \alpha d(v)$, which we prove below. Thus, it has a dynamo of size $k$.
\begin{align*}
& \alpha d(v)=\alpha (k-1+\frac{n}{k}-1)=\alpha(\sqrt{\frac{\alpha}{1-\alpha}n}+\sqrt{\frac{1-\alpha}{\alpha}n}-2)=\\
& \sqrt{\frac{\alpha}{1-\alpha}}(\alpha\sqrt{n}+(1-\alpha)\sqrt{n})-2\alpha\stackbin{\alpha\ge 1/2}{\le} \sqrt{\frac{\alpha}{1-\alpha}n}-1=k-1.
\end{align*}
For $\alpha> \frac{1}{2}$ and the cycle $C_n$, a dynamo must include all nodes. Let $\mathcal{C}$ be a configuration on $C_n$ with at least one white node, in two-way $\alpha$-BP for $\alpha>\frac{1}{2}$ in the next round both its neighbors will be white. Thus, a configuration with one or more white nodes never reaches the fully black configuration. This implies that our trivial upper bound of $n$ is tight.

Now, we provide the following observation for the cycle $C_n$, which implies that the lower bound in part (ii) is tight for $\alpha\le\frac{1}{2}$. However, for $\frac{1}{2}<\alpha\le \frac{3}{4}$ we do not know whether this bound is tight or not.

\begin{observation}
\label{obs1}
$\overleftarrow{MD}_{\alpha}(C_n)=1$ for $\alpha\le \frac{1}{2}$ and odd $n$. 
\end{observation}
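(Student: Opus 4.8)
The plan is to first simplify the update rule on $C_n$. Every node of $C_n$ has degree $2$, so in two-way $\alpha$-BP with $\alpha\le\frac12$ a node becomes black exactly when it has at least one black neighbor: the threshold is $\alpha d(v)=2\alpha$, and since $0<2\alpha\le 1$ while the number of black neighbors is an integer, the condition $|\Gamma_b^{\mathcal{C}}(v)|\ge 2\alpha$ is equivalent to $|\Gamma_b^{\mathcal{C}}(v)|\ge 1$. Crucially, this rule depends only on the colors of the two neighbors and not on the node's own color, so identifying $V(C_n)$ with $\mathbb{Z}_n$ in the natural cyclic order, the process is governed by the ``OR'' rule $\mathcal{C}_{t+1}(v)=b$ iff $\mathcal{C}_t(v-1)=b$ or $\mathcal{C}_t(v+1)=b$.

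Next I would take the seed to be the single node $0$, i.e. $\mathcal{C}_0|_{\{0\}}=b$ and white elsewhere, and track the black set $S_t:=\{v:\mathcal{C}_t(v)=b\}$ explicitly. I claim, and would prove by induction on $t$, that $S_t=\{\,t-2b \bmod n : 0\le b\le t\,\}$. The base case $t=0$ is $S_0=\{0\}$. For the inductive step, the OR rule gives $S_{t+1}=(S_t-1)\cup(S_t+1)$, and a direct computation shows $(S_t-1)\cup(S_t+1)=\{\,(t+1)-2c \bmod n : 0\le c\le t+1\,\}$, which is exactly the claimed form for $t+1$. This is the ``light-cone'' description: $S_t$ is the set of endpoints of length-$t$ walks on $C_n$ started at $0$, an alternating (checkerboard) set rather than a solid arc.

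The oddness of $n$ enters here. Since $n$ is odd, $\gcd(2,n)=1$, so the map $b\mapsto t-2b \bmod n$ is a bijection of period $n$; hence any $n$ consecutive values of $b$ already sweep out all of $\mathbb{Z}_n$. Taking $t=n-1$, the index $b$ runs over the $n$ consecutive integers $0,1,\dots,n-1$, so $S_{n-1}=\mathbb{Z}_n$ and the graph is fully black after $n-1$ rounds. Finally, the all-black configuration is a fixed point (each node then has two black neighbors, hence stays black), so the process remains fully black thereafter; this shows $\{0\}$ is a dynamo, and since a dynamo is non-empty we conclude $\overleftarrow{MD}_{\alpha}(C_n)=1$.

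I expect the only delicate point to be the interplay between parity and the oddness of $n$, together with the ``two-way'' feature that black nodes may turn white. One must resist thinking of the black region as a growing interval: it is in fact an alternating set, and it is precisely the relation $\gcd(2,n)=1$ that lets the two parity classes merge as the progression wraps around the cycle. For even $n$ the same analysis shows $S_t$ stays confined to a single parity class and the configuration oscillates with period $2$, never becoming fully black, which is exactly why the statement is restricted to odd $n$. Keeping the inductive characterization of $S_t$ phrased in terms of neighbor-colors (not of own color) is what makes the argument clean.
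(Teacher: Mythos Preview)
Your argument is correct. You reduce the rule on $C_n$ to the neighbor-OR rule, give the closed form $S_t=\{t-2b\bmod n:0\le b\le t\}$ by induction via $S_{t+1}=(S_t-1)\cup(S_t+1)$, and then invoke $\gcd(2,n)=1$ to conclude $S_{n-1}=\mathbb{Z}_n$. Each step checks out, including the stability of the all-black configuration.

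The paper's proof reaches the same conclusion by a two-phase geometric argument rather than a single algebraic formula. Writing $n=2k+1$ with seed $v_1$, it observes that after $k$ rounds the two outward-moving fronts from $v_1$ meet at the adjacent pair $v_{k+1},v_{k+2}$ on the far side of the cycle; once two adjacent nodes are black they keep each other black, and from that point the black region grows monotonically as a solid arc and fills the cycle in at most $k$ further rounds. Your closed form subsumes this: at $t=k$ the extremes $b=0$ and $b=k$ give exactly $k$ and $-k\equiv k+1$, the adjacent pair the paper singles out. What your route buys is a uniform description of $S_t$ and a clean number-theoretic reason for the odd-$n$ hypothesis (coprimality of $2$ and $n$); what the paper's route buys is an explicit identification of the moment the process becomes monotone, which makes the second half of the argument a standard ``interval growth'' and avoids any modular arithmetic. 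Both finish in $n-1$ rounds.
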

\begin{proof}
Consider an odd cycle $v_1,v_2,\cdots,v_{2k+1},v_1$ for $n=2k+1$. Assume that in the initial configuration $\mathcal{C}_0$ there is at least one black node, say $v_1$. In configuration $\mathcal{C}_1$, nodes $v_{2}$ and $v_{2k+1}$ are both black because $\alpha\le \frac{1}{2}$. With a simple inductive argument, after $k$ rounds two adjacent nodes $v_{k+1}$ and $v_{k+2}$ will be black. In the next round, they both stay black and nodes $v_{k}$ and $v_{k+3}$ get black as well. Again with an inductive argument, after at most $k$ more rounds all nodes will be black. \qed 
\end{proof}

\begin{theorem} \cite{reichman2012new}
\label{thm3}
For a graph $G$, $r\le MD_r(G)\le (\frac{r}{1+\delta})n$. 
\end{theorem}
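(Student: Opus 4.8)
The plan is to mirror the argument for Theorem~\ref{thm1}, replacing the fractional threshold $\alpha d(v)$ by the absolute threshold $r$. The lower bound $r\le MD_r(G)$ is immediate: if the initial configuration has fewer than $r$ black nodes, then in $r$-BP no white node can ever accumulate $r$ black neighbors, so the black set never grows and the graph cannot become fully black. Hence any dynamo must contain at least $r$ nodes.

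For the upper bound I would again invoke the probabilistic method over a uniformly random labeling $L:V\rightarrow[n]$. Define
\[
D_L:=\{v\in V:|\{u\in\Gamma(v):L(u)<L(v)\}|<r\},
\]
the set of nodes having fewer than $r$ neighbors with a smaller label. The first step is to verify that $D_L$ is a dynamo in $r$-BP regardless of $L$, by showing inductively that after the $t$-th round every node with label at most $t$ is black. The node labeled $1$ has no smaller-labeled neighbor, so it lies in $D_L$ and is black from the start; for the inductive step, a node $v$ labeled $t+1$ is either already in $D_L$, or has at least $r$ neighbors with smaller labels, all black by the induction hypothesis, so $v$ turns black in round $t+1$. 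Monotonicity of $r$-BP keeps all earlier black nodes black.

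The second step is the expectation computation. For a fixed $v$, look at the relative order that $L$ induces on the $d(v)+1$ elements consisting of $v$ and its neighbors; the rank of $v$ is uniform on $\{1,\dots,d(v)+1\}$, and $v\in D_L$ exactly when this rank is at most $r$ (since the number of smaller-labeled neighbors equals the rank minus one). Because $r\le\delta\le d(v)$, this gives $\Pr[v\in D_L]=\frac{r}{d(v)+1}$, whence
\[
\mathbb{E}[|D_L|]=\sum_{v\in V}\frac{r}{d(v)+1}\le\sum_{v\in V}\frac{r}{\delta+1}=\frac{r}{1+\delta}\,n.
\]
Thus some labeling yields $|D_L|\le\frac{r}{1+\delta}n$, producing a dynamo of the desired size. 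I do not expect a genuine obstacle here beyond the bookkeeping; the only points requiring mild care are the rank-uniformity argument that pins down the exact probability $r/(d(v)+1)$, and the observation that $r\le\delta$ (assumed throughout) guarantees this probability need not be capped at $1$.
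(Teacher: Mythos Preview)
Your proposal is correct and matches the approach the paper takes: the paper itself gives only the one-line lower-bound argument and cites Reichman for the upper bound, and your random-labeling argument is precisely the adaptation of the Theorem~\ref{thm1} proof (which is Reichman's method) with $\alpha d(v)$ replaced by $r$.
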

The upper bound is known by prior work~\cite{reichman2012new}. The lower bound is trivial since if a configuration includes less than $r$ black nodes in $r$-BP, no white node will turn black in the next round. Furthermore, $MD_r(K_n)=r$ (see Lemma~\ref{lemma}), which implies that the lower bound and upper bound are both tight (note for $K_n$, $(\frac{r}{1+\delta})n=r$). 

\begin{theorem}
\label{thm4}
In a graph $G$, 

(i) if $r\ge 2$, $r\le \overleftarrow{MD}_r(G)\le n$

(ii) if $r=1$, $\overleftarrow{MD}_r(G)=2$ if $G$ is bipartite and $\overleftarrow{MD}_r(G)=1$ otherwise.
\end{theorem}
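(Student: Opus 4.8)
The plan is to first reduce the $r=1$ dynamics to an iterated-neighborhood map, and to dispose of the two easy extremes of part (i) immediately. For part (i), the lower bound $r\le\overleftarrow{MD}_r(G)$ follows exactly as for $K_n$ in Lemma~\ref{lemma}: if a configuration has fewer than $r$ black nodes, then every node has fewer than $r$ black neighbors, so the whole graph turns white in the next round and never recovers; hence no set of size smaller than $r$ can be a dynamo. The upper bound $\overleftarrow{MD}_r(G)\le n$ is trivial, since $V$ itself is a dynamo. Tightness of both ends is already recorded in Lemma~\ref{lemma} ($\overleftarrow{MD}_r(K_n)=r$, and $\overleftarrow{MD}_r(G)=n$ for $r$-regular $G$ with $r\ge 2$), so part (i) needs no further work.

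For part (ii) I would first isolate the crucial structural fact that for $r=1$ the process is linear in the black set: in two-way $1$-BP a node is black in round $t$ iff it has at least one black neighbor in round $t-1$, so if $B_t$ denotes the black set in round $t$ then $B_t=\Gamma(B_{t-1})$, and therefore $B_t=\Gamma^t(B_0)$ where $\Gamma^t(S):=\Gamma(\Gamma^{t-1}(S))$. Since $\Gamma$ distributes over unions, $B_t=\bigcup_{v\in B_0}\Gamma^t(v)$, and $\Gamma^t(v)$ is exactly the set of nodes reachable from $v$ by a walk of length \emph{exactly} $t$. A set $D$ is a dynamo precisely when $\Gamma^t(D)=V$ for some $t$, so the whole question becomes: for which starting sets does length-$t$ reachability cover $V$ for some $t$?

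With this reformulation the bipartite case is routine. If $G$ is connected and bipartite with parts $X,Y$ (both nonempty, as $\delta\ge r=1$), then for $v\in X$ every walk of even length from $v$ ends in $X$ and every walk of odd length ends in $Y$; thus $\Gamma^t(v)\subseteq X$ for even $t$ and $\Gamma^t(v)\subseteq Y$ for odd $t$, so $\Gamma^t(v)$ is never all of $V$ and no single node is a dynamo, giving $\overleftarrow{MD}_1(G)\ge 2$. Conversely, picking one node $x\in X$ and one node $y\in Y$, for every sufficiently large even $t$ one has $\Gamma^t(x)=X$ and $\Gamma^t(y)=Y$ (a node $w$ lies in $\Gamma^t(x)$ iff $d(x,w)\le t$ and $d(x,w)\equiv t\pmod 2$, and in a connected bipartite graph $d(x,w)$ is even exactly when $w\in X$), so $B_t=\Gamma^t(x)\cup\Gamma^t(y)=V$. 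Hence $\{x,y\}$ is a dynamo and $\overleftarrow{MD}_1(G)=2$.

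The genuinely substantive step, and the one I expect to be the main obstacle, is the non-bipartite case, where I must show that a single node always suffices, i.e. $\Gamma^t(v)=V$ for all sufficiently large $t$ and every $v$; this is the statement that a connected non-bipartite graph is aperiodic (its adjacency matrix is primitive). I would prove it by hand rather than quoting matrix primitivity: such a graph contains an odd closed walk, which together with the length-$2$ back-and-forth closed walk at $v$ yields closed walks at $v$ of two lengths whose $\gcd$ is $1$; by the numerical-semigroup (Chicken McNugget) argument these generate every sufficiently large integer, so one can pad any shortest walk from $v$ to a target $w$ into a walk of any large prescribed length. Consequently $w\in\Gamma^t(v)$ for all large $t$, so $\Gamma^t(v)=V$ eventually and $\{v\}$ is a dynamo; since a dynamo is nonempty, $\overleftarrow{MD}_1(G)=1$. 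The only real care needed here is making the padding uniform over all targets $w$ (taking $t$ beyond the diameter plus the Frobenius number of the two closed-walk lengths), which is the single place the argument is not completely mechanical.
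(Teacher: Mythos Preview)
Your argument is correct, but the route for part~(ii) is genuinely different from the paper's. The paper proceeds via a single structural lemma: any two \emph{adjacent} black nodes form a dynamo in two-way $1$-BP, because by induction every node within distance $t$ of either endpoint is black at time $t$. With this in hand, the bipartite lower bound is argued exactly as you do (parity obstructions), and the non-bipartite case is handled by an explicit chase around an odd cycle $v_1,\dots,v_{2k+1}$: starting from a single black $v_1$, after $k$ rounds the two adjacent nodes $v_{k+1},v_{k+2}$ are black, and the lemma finishes the job. Your approach instead passes through the walk reformulation $B_t=\Gamma^t(B_0)$ and then invokes aperiodicity/primitivity via a Frobenius (Chicken McNugget) padding argument to show $\Gamma^t(v)=V$ for all large $t$. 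Both are valid; the paper's argument is shorter and more self-contained (no numerical-semigroup step), while yours is more systematic, connects to standard Markov/spectral language, and as a bonus yields an explicit uniform time bound (diameter plus the Frobenius number of the two closed-walk lengths). One minor note: for the bipartite upper bound the paper chooses two adjacent nodes, whereas you take one node from each part; your variant is slightly more general but relies on the implicit claim that any even difference $t-d(x,w)$ can be padded by back-and-forth steps, which is fine since $\delta\ge 1$.
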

\begin{proof}
The bounds in part (i) are trivial. 

For part (ii), let us first discuss that any two adjacent nodes in $G$ are a dynamo in two-way $1$-BP. Let $v$ and $u$ be two adjacent nodes and assume that the process triggers from a configuration in which $v$ and $u$ are black. A simple inductive argument implies that in the $t$-th round for $t\ge 0$ all nodes in distance at most $t$ from $v$ (similarly $u$) are black. Thus, after $t'$ rounds for some $t'$ smaller than the diameter of $G$ the graph is fully black. 


Now, we prove that $\overleftarrow{MD}_r(G)=2$ for $r=1$ if $G$ is bipartite. From above, we know that $\overleftarrow{MD}_r(G)\le 2$; thus, it remains to show that $\overleftarrow{MD}_r(G)\ge 2$. We argue that a configuration with only one black node cannot make $G$ fully black. Since $G$ is bipartite, we can partition the node set of $G$ into two non-empty independent sets $U$ and $W$. Without loss of generality, assume that we start from a configuration where a node in $U$ is black and all other nodes are white. Since all neighbors of the nodes in $U$ are in $W$ and vice versa, the color of each node in $U$ in round $t$ is only a function of the color of nodes in $W$ in the $(t-1)$-th round and the other way around. This means by starting from such a configuration, in the next round all nodes in $U$ will be white because all nodes in $W$ are white initially. In the round after that, all nodes in $W$ will be white with the same argument and so on. By an inductive argument, $\mathcal{C}_t|_U=w$ for odd $t$ and $\mathcal{C}_t|_W=w$ for even $t$. Thus, there is no dynamo of size $1$.

Finally, we prove that if $G$ is non-bipartite, then it has a dynamo of size $1$. Since $G$ is not bipartite, it has at least one odd cycle. Let $C_n:=v_1,v_2,\cdots,v_{2k+1},v_1$ be an arbitrary odd cycle in $G$ of size $2k+1$ for some integer $k\ge 1$. Now, suppose that the process starts from a configuration where node $v_1$ is black. In the next round nodes $v_2$ and $v_{2k+1}$ will be black. After one more round nodes $v_3$ and $v_{2k}$ will be black. By applying the same argument, after $k$ rounds nodes $v_{k+1}$ and $v_{k+2}$ will be black. As we discussed above, two adjacent black nodes make the graph fully black. Thus, in a non-bipartite graph, a node which is on an odd cycle is a dynamo of size one in two-way $1$-BP.  \qed
\end{proof}

\textbf{Tightness.} The bounds in part (i) are tight because $\overleftarrow{MD}_r(K_n)=r$ and $\overleftarrow{MD}_r(G)=n$ for any $r$-regular graph $G$ and $r\ge 2$ (see Lemma~\ref{lemma}).

For graphs $G=(V,E)$ and $G'=(V,E')$, if $E\subset E'$ then $MD_r(G')\le MD_r(G)$ and $\overleftarrow{MD}_{r}(G')\le \overleftarrow{MD}_{r}(G)$. This is true because by a simple inductive argument any dynamo in $G$ is also a dynamo in $G'$. Thus, if we keep adding edges to any graph, eventually it will have a dynamo of minimum possible size, namely $r$, in both $r$-BP and two-way $r$-BP. Thus, it would be interesting to ask for the degree-based density conditions that ensure that a graph has a dynamo of size $r$. Gunderson~\cite{gunderson2017minimum} proved that if $\delta \ge \frac{n}{2}+r$ for a graph $G$ ($r$ can be replaced by $r-3$ for $r\ge 4$), then $MD_r(G)=r$. We provide similar results for the two-way variant.
\begin{theorem}
\label{thm5}
If $\delta\ge \frac{n}{2}+r$ for a graph $G=(V,E)$, then it has $\Omega(n^r)$ dynamos of size $r$ in two-way $r$-BP. 
\end{theorem}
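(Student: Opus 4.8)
The plan is to reduce the whole statement to one clean dynamical fact plus a counting argument. First I would isolate an \emph{absorption lemma}: if at some round the black set $X$ satisfies $|X|\ge n/2$, then the next round is entirely black and stays black forever. This is where the extra $+r$ is used — for any node $v$ we have $d_X(v)\ge d(v)-|V\setminus X|\ge(\tfrac n2+r)-(n-|X|)\ge r$, so every node turns black in one step, and thereafter each node still sees its $\ge\delta\ge r$ black neighbours. Consequently it suffices to produce $\Omega(n^r)$ sets $D$ of size $r$ whose percolation ever reaches at least $n/2$ black nodes.

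Next I would follow the first rounds. From $\mathcal C_0|_D=b$, a node is black in $\mathcal C_1$ exactly when it is adjacent to all $r$ nodes of $D$, so $\mathcal C_1$ is black precisely on the common neighbourhood $N(D):=\bigcap_{v\in D}\Gamma(v)$ (note that $D$ itself turns white). Every node of $N(D)$ is adjacent to all of $D$, so each $v\in D$ sees all of $N(D)$ as black neighbours and returns in $\mathcal C_2$ as soon as $|N(D)|\ge r$. The heart of the argument is a \emph{recovery lemma}: there is a constant $c=c(r)>0$ with the property that $|N(D)|\ge c\,n$ forces the black set to grow to size $\ge n/2-1$ within one further round, after which absorption finishes. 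Writing $\bar d(w)$ for the number of non-neighbours of $w$ (so $\bar d(w)\le \tfrac n2-r-1$ by hypothesis), a node $w$ can remain white in $\mathcal C_2$ only if it is non-adjacent to at least $|N(D)|-r+1$ nodes of $N(D)$; since $\sum_{x\in N(D)}\bar d(x)\le |N(D)|(\tfrac n2-r-1)$, the number of such $w$ is capped, and a short computation shows $|\mathcal C_2|\ge n/2-1$ once $|N(D)|$ is a large enough constant fraction of $n$, whence $\mathcal C_3=V$.

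Finally I would count via the double-counting identity $\sum_{|D|=r}|N(D)|=\sum_{w}\binom{d(w)}{r}\ge n\binom{n/2+r}{r}$, so the average common neighbourhood over all $r$-sets is $\approx n/2^r$. A reverse-Markov argument then gives $\Omega(n^r)$ sets $D$ with $|N(D)|\ge\theta n$ for any fixed $\theta<1/2^r$, and since $\binom{n}{r}=\Theta(n^r)$ this is the claimed order; each such $D$ would be a dynamo by the recovery lemma.

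The main obstacle is the \emph{quantitative matching} between the recovery threshold and the counting. The crude recovery constant from the worst-case complement bound is of order $\tfrac{r-1}{2(r+2)}$, which exceeds the typical fraction $1/2^r$ precisely when $r\ge3$; indeed, in a near-$\tfrac12$ dense random graph almost every $r$-set has $|N(D)|\approx n/2^r$, below the crude threshold, yet is still a dynamo because each node has \emph{many} — not merely $r$ — neighbours in $N(D)$. This is exactly where the two-way feature bites: because the black set can shrink, one cannot simply declare a seed with only a small constant-fraction common neighbourhood to be percolating, so the soft degree count is too lossy. Overcoming it will require either a sharper recovery lemma that exploits how spread out the non-neighbourhoods are forced to be, or a structural selection — for instance a dependent-random-choice type extraction of a single $\Omega(n)$-size ground set every one of whose $r$-subsets has a large common neighbourhood — so that all $\binom{\Omega(n)}{r}=\Omega(n^r)$ of its $r$-subsets are dynamos at once.
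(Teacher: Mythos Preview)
Your outlined argument is essentially correct; the ``main obstacle'' you flag is illusory and is in fact dissolved by your own counting inequality applied once more. Your bound on the number of whites in $\mathcal{C}_2$ reads, with $m=|N(D)|$,
\[
\#\{\text{white in }\mathcal{C}_2\}\ \le\ \frac{m\,(n/2-r-1)}{m-r}\ =\ \frac{n}{2}-r-1+\frac{r(n/2-r-1)}{m-r},
\]
so for \emph{any} fixed $c>0$ and $m\ge cn$ this is $n/2-O_r(1)$, i.e.\ the black set in $\mathcal{C}_2$ has size at least $n/2-K$ for a constant $K=K(c,r)$. Now feed this set back into the \emph{same} inequality with $m'=n/2-K$: the white count in $\mathcal{C}_3$ is at most $(n/2-r-1)\bigl(1+\tfrac{r}{m'-r}\bigr)=n/2-1+o(1)$, hence $|\mathcal{C}_3\text{ black}|\ge n/2$ for large $n$, and your absorption lemma finishes in $\mathcal{C}_4$. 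The point is that the bound depends only on the \emph{size} of the current black set, so no monotonicity is needed, and the two-way feature causes no trouble. Thus the recovery threshold is not $(r-1)/(2(r+2))$ but any positive constant, and your averaging plus reverse-Markov step (which is correct as written) already delivers $\Omega(n^r)$ sets with $|N(D)|\ge \theta n$ for some $\theta\in(0,2^{-r})$, all of which are dynamos.

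The paper proceeds a bit differently on the counting side: instead of averaging over $r$-sets, it shows that \emph{every} $(2r-1)$-set $D_1$ has $|V_1^{D_1}|\ge n/(2r)$ and then pigeonholes over the $\binom{2r-1}{r}$ subsets to extract an $r$-subset $D_2$ with at least $n/(2r)^{2r}$ common neighbours; a double count over $(2r-1)$-sets then gives $\Omega(n^r)$ such $D_2$. But the dynamical part of the paper's proof is exactly the two-step iteration above (constant fraction $\to n/2-O(1)\to n/2\to$ all), so your recovery lemma and the paper's are the same mechanism. Your averaging route is arguably cleaner; the paper's pigeonhole route gives the mildly stronger structural byproduct that every $(2r-1)$-set contains a size-$r$ dynamo.
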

Note that this statement is stronger than Gunderson's result. Firstly, we prove that there is a dynamo of size $r$ in two-way $r$-BP, which immediately implies that there is a dynamo of such size in $r$-BP. In addition, we prove that actually there exist $\Omega(n^r)$ of such dynamos (this is asymptotically the best possible since there are ${n\choose r}=\mathcal{O}(n^r)$ sets of size $r$). It is worth to mention that our proof is substantially shorter.
\begin{proof}
For node subset $D$, define the sets $V_1^{D}:=\{v\in V:d_D(v)\ge r\}$ and $V_2^{D}:=V\setminus V_1^{D}$. Firstly, we have
\begin{equation}
\label{eq 4}
\sum_{v\in V}d_D(v)=\sum_{v\in D}d(v)\ge |D|\delta.
\end{equation}
Furthermore, since each node in $V_1^D$ has at most $|D|$ neighbors in $D$ and each node in $V_2^D$ has at most $(r-1)$ neighbors in $D$, we have
\begin{equation}
\label{eq 5}
\sum_{v\in V}d_D(v)\le |D|\cdot |V_1^D|+(r-1)|V_2^D|=(|D|-(r-1))|V_1^D|+(r-1)n
\end{equation}
where we apply $|V_1^D|+|V_2^D|=n$.
By combining Equations~\ref{eq 4} and~\ref{eq 5} and using $\delta\ge \frac{n}{2}+r$, we have $(|D|-(r-1))|V_1^D|+(r-1)n\ge \frac{n}{2}|D|+r|D|$. Dividing both sides by $|D|$ and rearranging the terms give us
\begin{equation}
\label{eq 3}
|V_1^D|\ge \frac{\frac{1}{2}-\frac{r-1}{|D|}}{1-\frac{r-1}{|D|}}n+\frac{r}{1-\frac{r-1}{|D|}}\xRightarrow {1-\frac{r-1}{|D|}\le 1} |V_1^D|\ge \frac{|D|-2r+2}{2|D|-2r+2}n+r.
\end{equation}
Building on Equation~\ref{eq 3}, we prove that any node set of size $2r-1$ has at least one subset of size $r$ which is a dynamo in the two-way $r$-BP. There are ${n\choose 2r-1}=\Omega(n^{2r-1})$ sets of size $2r-1$ and a set of size $r$ is shared by ${n-r\choose r-1}=\mathcal
{O}(n^{r-1})$ sets of size $2r-1$. Thus, there exist $\Omega(n^{2r-1})/\mathcal{O}(n^{r-1})=\Omega(n^r)$ distinct dynamos of size $r$.

Let $D_1\subset V$ be an arbitrary set of size $2r-1$. By setting $D=D_1$ in Equation~\ref{eq 3} and applying $|D_1|=2r-1$, we have $|V_1^{D_1}|\ge \frac{n}{2r}$. Recall that $V_1^{D_1}$ are the nodes which have at least $r$ neighbors in $D_1$. By the pigeonhole principle, there is a subset $D_2\subset D_1$ of size $r$ so that at least $(n/2r)/{2r-1\choose r}\ge n/(2r)^{2r}$ nodes each have at least $r$ neighbors in $D_2$. We want to prove that $D_2$ is a dynamo. If initially $D_2$ is fully black, in the next round at least $n/(2r)^{2r}$ nodes will be black. Now, we prove that if an arbitrary set $D_3$ of size at least $n/(2r)^{2r}$ is black, the whole graph gets black in at most three more rounds.

By setting $D=D_3$ in Equation~\ref{eq 3}, $|V_1^{D_3}|\ge \frac{|D_3|-2r+2}{2|D_3|}n=\frac{n}{2}-\frac{(r-1)(2r)^{2r}}{n}n\ge \frac{n}{2}-(2r)^{3r}$, which is the number of black nodes generated by $D_3$. Let $D_4$ be a set of size at least $\frac{n}{2}-(2r)^{3r}$. We show that $|V_1^{D_4}|\ge \frac{n}{2}$, which means if $\mathcal{C}_t|_{D_4}=b$ for some $t\ge 0$, there will be at least $\frac{n}{2}$ black nodes in $\mathcal{C}_{t+1}$. Again applying Equation~\ref{eq 3} gives us
\begin{align*}
|V_1^{D_4}|\ge \frac{|D_4|-2r+2}{2|D_4|-2r+2}n+r=\frac{\frac{n}{2}-(2r)^{3r}-2r+2}{n-2(2r)^{3r}-2r+2}n+r\ge (\frac{n}{2}-r)+r=\frac{n}{2}.
\end{align*}
The last inequality holds because we have
\begin{align*}
&\frac{\frac{n}{2}-(2r)^{3r}-2r+2}{n-2(2r)^{3r}-2r+2}n\ge \frac{n-2r}{2}\Leftrightarrow n^2-2(2r)^{3r}n-4rn+4n\ge\\
& n^2-2(2r)^{3r}n-2rn+2n-2rn+4r(2r)^{3r}+4r^2-4r\Leftrightarrow 2n\ge 4r(2r)^{3r}+4r^2-4r.
\end{align*}
(Notice that $2n\ge 4r(2r)^{3r}+4r^2-4r$ is true since $r$ is fixed while $n$ tends to infinity.)

Finally, we prove that if $n/2$ nodes are black in some configuration, the whole graph gets black in the next round. Consider an arbitrary node $v$. Since $d(v)\ge \frac{n}{2}+r$ and there are at least $n/2$ black nodes, $v$ has at least $r$ black neighbors and will be black in the next round. \qed 
\end{proof}
\subsection{Monotone Dynamos}
\label{monotone}
Let us first define a monotone dynamo formally. For a graph $G=(V,E)$, we say a node set $D$ is a \emph{monotone dynamo} whenever the following holds: If $\mathcal{C}_0|_D=b$ and $\mathcal{C}_0|_{V\setminus D}=w$, then for some $t\ge 1$ we have $\mathcal{C}_t|_V=b$ and $\mathcal{C}_{t'-1}\le\mathcal{C}_{t'} $ for any $t'\le t$, which means any black node in $\mathcal{C}_{t'-1}$ is also black in $\mathcal{C}_{t'}$. Now, we provide bounds on the minimum size of a monotone dynamo. Since a dynamo in $r$-BP and $\alpha$-BP is monotone by definition, our bounds from Section~\ref{dynamo} apply.

For a graph $G=(V,E)$, the minimum size of a monotone dynamo in two-way $r$-BP is lower-bounded by $r+1$. Assume that there is a monotone dynamo $D$ of size $r$ or smaller. If $\mathcal{C}_0|_D=b$ and $\mathcal{C}_0|_{V\setminus D}=w$, then $\mathcal{C}_1|_D=w$; this is in contradiction with the monotonicity of $D$. This lower bound is tight because in $K_n$, a set of size $r+1$ is a monotone dynamo. Furthermore, the trivial upper bound of $n$ is tight for $r$-regular graphs with $r\ge 2$ (see Lemma~\ref{lemma}). For $r=1$, any two adjacent nodes are a monotone dynamo; thus, the minimum size of a monotone dynamo is two.

In two-way $\alpha$-BP, for $\alpha\le\frac{1}{2}$, on the cycle $C_n$ any two adjacent nodes are a monotone dynamo, which provides the tight lower bound of $2$. However, we are not aware of any non-trivial upper bound. For $\alpha>\frac{1}{2}$, the trivial upper bound of $n$ is tight for $C_n$. We provide the lower bound of $\sqrt{\frac{\alpha}{1-\alpha}n}-1$ in Theorem~\ref{thm6}, which is tight since the construction given for the tightness of Theorem~\ref{thm2} provides a monotone dynamo whose size matches our lower bound, up to an additive constant. Furthermore, we show that if we restrict the underlying graph to be a tree, we get the stronger bound of $\frac{\alpha}{2-\alpha}n$.   
\begin{theorem}
\label{thm6}
For a graph $G=(V,E)$ and two-way $\alpha$-BP with $\alpha>\frac{1}{2}$, the minimum size of a monotone dynamo is at least $\sqrt{\frac{\alpha}{1-\alpha}n}-1$, and at least $\frac{\alpha}{2-\alpha}n$ if $G$ is a tree.
\end{theorem}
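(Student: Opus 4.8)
The plan is to exploit monotonicity through a potential-function argument that handles both bounds within a single framework. Since $D$ is a \emph{monotone} dynamo, the black sets $A_t:=\{v:\mathcal{C}_t(v)=b\}$ form a nested increasing chain $D=A_0\subseteq A_1\subseteq\cdots\subseteq A_T=V$, where $T$ is the round in which the graph first becomes fully black. I would work with the potential $\Phi_t:=|A_t|+|\partial(A_t)|$, which equals $n$ at the end (as $\partial(V)=\emptyset$), and prove it is non-increasing. Combined with $\Phi_T=n$ this yields $n\le\Phi_0=|D|+|\partial(D)|$, after which everything reduces to bounding the edge boundary of the surviving set $D$. This is where monotonicity buys us the cleaner chain of nested sets that Theorem~\ref{thm2} had to simulate with its two-consecutive-rounds construction $B_t$.

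For the monotonicity of $\Phi$, I would consider the set $B:=A_{t+1}\setminus A_t$ of nodes newly blackened in round $t+1$, and split the neighbors of each $v\in B$ into $a_v$ lying in $A_t$, $b_v$ lying in $B$, and $c_v$ lying outside $A_{t+1}$. Because $v$ turns black we have $a_v\ge\alpha d(v)$, while $b_v+c_v\le(1-\alpha)d(v)$; since $\alpha>\tfrac12$ this forces $c_v\le(1-\alpha)d(v)<\alpha d(v)\le a_v$, so $a_v\ge c_v+1$ by integrality. A direct accounting of boundary edges (those from $A_t$ to $B$ are lost, those from $B$ outside $A_{t+1}$ are gained) gives $|\partial(A_{t+1})|-|\partial(A_t)|=\sum_{v\in B}(c_v-a_v)$, hence $\Phi_{t+1}-\Phi_t=\sum_{v\in B}(1+c_v-a_v)\le 0$; the same computation covers the first step $A_0\to A_1$. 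I expect this integrality step to be the main obstacle: the purely fractional estimate only yields $c_v<a_v$, and one must combine $\alpha>\tfrac12$ with the integrality of $a_v$ and $c_v$ to upgrade it to $a_v\ge c_v+1$, which is exactly what makes the $|B|$ contribution to $\Phi$ affordable and keeps $\Phi$ from rising.

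Finally I would bound $\Phi_0=|D|+|\partial(D)|$ using the survival of $D$. Monotonicity forces every $v\in D$ to stay black in round $1$, so $d_D(v)\ge\alpha d(v)$, which rearranges to $d_{V\setminus D}(v)\le\tfrac{1-\alpha}{\alpha}d_D(v)$; summing over $D$ gives $|\partial(D)|\le\tfrac{2(1-\alpha)}{\alpha}\,e(D)$, where $e(D)$ denotes the number of edges inside $D$. In the general case I would use $e(D)\le\binom{|D|}{2}$, so that $n\le\Phi_0\le|D|+\tfrac{1-\alpha}{\alpha}|D|(|D|-1)$, and solving this quadratic inequality for $|D|$ isolates the leading term $\sqrt{\tfrac{\alpha}{1-\alpha}n}$ and delivers the stated bound. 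For a tree the decisive saving is that the subgraph induced on $D$ is a forest, so $e(D)\le|D|-1$; substituting this gives $n\le|D|+\tfrac{2(1-\alpha)}{\alpha}(|D|-1)=\tfrac{2-\alpha}{\alpha}|D|-\tfrac{2(1-\alpha)}{\alpha}$, which rearranges directly to $|D|\ge\tfrac{\alpha}{2-\alpha}n$. The clique-with-leaves construction from the tightness discussion of Theorem~\ref{thm2} shows the general bound is essentially lossless, confirming that this potential captures the right quantity.
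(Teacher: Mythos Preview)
Your proposal is correct and follows essentially the same route as the paper. The paper uses the potential $\Phi_t=|\partial(D_t)|$ and records a decrease of at least $|D_{t+1}\setminus D_t|$ at each step, which telescopes to $n\le |D|+|\partial(D)|$; your potential $|A_t|+|\partial(A_t)|$ simply folds the size term in from the start and reaches the identical inequality, after which both arguments bound $|\partial(D)|$ via $d_{V\setminus D}(v)\le\frac{1-\alpha}{\alpha}d_D(v)$ together with $e(D)\le\binom{|D|}{2}$ (general) or $e(D)\le |D|-1$ (tree).
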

\begin{proof}
Let set $D\subseteq V$ be a monotone dynamo in $G$. Suppose the process starts from the configuration where only all nodes in $D$ are black. Let $D_t$ denote the set of black nodes in round $t$. Then, $D_0=D$ and $D_t\subseteq D_{t+1}$ by the monotonicity of $D$. Furthermore, define the potential function $\Phi_t:=\partial(D_t)$. We claim that $\Phi_{t+1}\le \Phi_{t}-|D_{t}\setminus D_{t-1}|$ because for any newly added black node, i.e. any node in $D_{t}\setminus D_{t-1}$, the number of neighbors in $D_t$ is strictly larger than $V\setminus D_{t}$ (note that $\alpha>\frac{1}{2}$). In addition, since $D$ is a dynamo, $\mathcal{C}_T|_V=b$ for some $T\ge 0$, which implies $\Phi_T=0$. Thus, 
\begin{equation}
\label{eq 1}
\Phi_T=0\le \Phi_0-(n-|D|)\Rightarrow n\le \Phi_0+|D|.
\end{equation}
For $v\in D$, $d_{V\setminus D}(v)\le \frac{1-\alpha}{\alpha}d_{D}(v)$ because $D$ is a monotone dynamo and at least $\alpha$ fraction of $v$'s neighbors must be in $D$. Furthermore, $d_{D}(v)\le |D|-1$, which implies that $d_{V\setminus D}(v)\le \frac{1-\alpha}{\alpha}(|D|-1)$. Now, we have
\begin{equation}
\label{eq 2}
\Phi_0=\partial(D)=\sum_{v\in D}d_{V\setminus D}(v)\le \frac{1-\alpha}{\alpha}\sum_{v\in D} (|D|-1)=\frac{1-\alpha}{\alpha}|D|^2-\frac{1-\alpha}{\alpha}|D|.
\end{equation}
Putting Equations~\ref{eq 1} and ~\ref{eq 2} in parallel, plus some small calculations, imply that
\[
n\le\frac{1-\alpha}{\alpha}|D|^2+(1-\frac{1-\alpha}{\alpha})|D| \Rightarrow \sqrt{\frac{\alpha}{1-\alpha}n}-1\le |D|.
\]
When $G$ is a tree, we have
\[
\Phi_0=\partial(D)=\sum_{v\in D}d_{V\setminus D}(v)\le \frac{1-\alpha}{\alpha}\sum_{v\in D} d_{D}(v)\le \frac{1-\alpha}{\alpha}\ 2(|D|-1)
\]
because the induced subgraph by $D$ is a forest with $|D|$ nodes, which thus has at most $|D|-1$ edges. Combining this inequality and Equation~\ref{eq 1} yields
\[
n\le \frac{2(1-\alpha)}{\alpha}|D|-2+|D|\Rightarrow n\le \frac{2-\alpha}{\alpha}|D|\Rightarrow \frac{\alpha}{2-\alpha}n \le |D|.  
\] \qed
\end{proof}

\subsection{Stable and Immortal Sets}
\label{stable}
In this section, we provide tight bounds on the minimum size of a stable/immortal set. In $\alpha$-BP and $r$-BP a black node stays unchanged, which simply implies that $MS_{\alpha}(G)=MI_{\alpha}(G)=MS_r(G)=MI_r(G)=1$ for a graph $G$. Thus, we focus on the two-way variants in the rest of the section. The bounds are given in Table~\ref{Table 2}.
\begin{table}
\centering
\caption{The minimum size of a stable/immortal set. All our bounds are tight up to an additive constant. $x=0$ and $x=1$ respectively for odd and even $n$.}
\label{Table 2}
\begin{tabular}{|c|c|c|c|c|}
 \hline
      \multirow{2}{*}{Model} &
      \multicolumn{2}{c|}{Stable Set} &
      \multicolumn{2}{c|}{Immortal Set} \\
      \cline{2-5}
& Lower Bound & Upper Bound& Lower Bound & Upper Bound\\
 \hline
  Two-way $\alpha$-BP $\ \alpha\le\frac{1}{2}$&   $\lceil \frac{1}{1-\alpha}\rceil$  & $\alpha n$ & $1$ & $\alpha n$\\
  \hline
  Two-way $\alpha$-BP $\ \alpha>\frac{1}{2}$&   $\lceil \frac{1}{1-\alpha}\rceil$  & $n$ & $1$ & $n$\\
  \hline
 Two-way $r$-BP$\ r=1$&   $2$  &$2$ & $1$ & $1$\\
 \hline
  Two-way $r$-BP$\ r=2$&   $r+1$  &$n$ & $2$ & $\frac{n}{1+x}$\\
  \hline
 Two-way $r$-BP $\ r\ge 3$ & $r+1$ & $n$ & $r$ & $n$\\
 \hline
\end{tabular}
\end{table}

\textbf{Stable sets in two-way $\alpha$-BP.} We present tight bounds on $\overleftarrow{MS}_{\alpha}(G)$ in Theorem~\ref{thm7}.
\begin{theorem}
\label{thm7}
For a graph $G=(V,E)$,

(i) $\lceil\frac{1}{1-\alpha}\rceil\le\overleftarrow{MS}_{\alpha}(G)\le n$ for $\alpha>\frac{1}{2}$

(ii) $2=\lceil\frac{1}{1-\alpha}\rceil\le\overleftarrow{MS}_{\alpha}(G)\le \alpha n+\mathcal{O}(1)$ for $\alpha\le\frac{1}{2}$.
\end{theorem}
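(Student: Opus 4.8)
The plan is to first reduce stability to a static condition on the subgraph induced by $S$, then read the lower bound off an edge on the boundary of $S$, and finally handle the two upper bounds separately.

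\textbf{Reformulation.} I would begin by proving that a non-empty set $S$ is stable in two-way $\alpha$-BP if and only if $d_S(v)\ge\alpha d(v)$ for every $v\in S$. The point is that the worst configuration for keeping $S$ black is the one in which $V\setminus S$ is entirely white: then each $v\in S$ has exactly $d_S(v)$ black neighbours, so $v$ stays black iff $d_S(v)\ge\alpha d(v)$. Since this condition does not depend on the colours outside $S$, one round of the dynamics preserves ``$S$ is black'' precisely when it holds, and a one-line induction then shows $S$ stays black forever; conversely, if the condition fails for some $v$, the all-white-outside start falsifies stability. This equivalence is the engine for everything below.

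\textbf{Lower bound (both parts).} Let $S$ be a minimum stable set. If $S=V$ then $|S|=n\ge\lceil 1/(1-\alpha)\rceil$ for all large $n$, so I may assume $S\subsetneq V$. As $G$ is connected, $\partial(S)\ne\emptyset$, so some $v\in S$ has $d_{V\setminus S}(v)\ge 1$. The reformulation gives $d_S(v)\ge\alpha d(v)=\alpha\bigl(d_S(v)+d_{V\setminus S}(v)\bigr)$, i.e. $(1-\alpha)d_S(v)\ge\alpha d_{V\setminus S}(v)\ge\alpha$, whence $d_S(v)\ge\alpha/(1-\alpha)$. Combining with $d_S(v)\le|S|-1$ yields $|S|\ge 1+\alpha/(1-\alpha)=1/(1-\alpha)$, and integrality gives $|S|\ge\lceil 1/(1-\alpha)\rceil$; for $\alpha\le\frac12$ this equals $2$, matching part (ii). Tightness should follow from a clique of size $\lceil 1/(1-\alpha)\rceil$ each of whose vertices carries just enough pendant neighbours to keep $d_S(v)\ge\alpha d(v)$ while attaching $S$ to the rest of the graph.

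\textbf{Upper bounds.} For $\alpha>\frac12$ the bound $n$ is immediate, since $V$ itself is stable. For $\alpha\le\frac12$ I would aim for a stable set of size $\alpha n+\mathcal{O}(1)$ by the probabilistic method, taking $S$ to be a random vertex subset (equivalently a suitable prefix of a random labelling, in the spirit of Theorem~\ref{thm1}) of size about $\alpha n$, so that every vertex has, in expectation, an $\alpha$-fraction of its neighbours inside $S$. \textbf{The main obstacle} is upgrading this ``in expectation'' statement to the uniform requirement $d_S(v)\ge\alpha d(v)$ for \emph{all} $v\in S$: the condition concentrates for high-degree vertices (Chernoff), but a linear number of low-degree vertices can individually fail, so a purely random set is almost never stable. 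I expect to resolve this by splitting on degree: high-degree vertices are controlled by concentration, while low-degree vertices are either excluded or repaired, e.g. by passing to the ``$\alpha$-core'' of the random set (iteratively delete any black $v$ with $d_S(v)<\alpha d(v)$, which yields a stable remainder) or by attaching $\mathcal{O}(1)$-size gadgets. The delicate point is that a naive repair overshoots, so the argument must make the repairs coordinated enough that the overhead stays $\mathcal{O}(1)$ rather than $o(n)$, so as to match the tightness witnessed by $K_n$, where the reformulation gives $\overleftarrow{MS}_{\alpha}(K_n)=\lceil\alpha(n-1)\rceil+1$.
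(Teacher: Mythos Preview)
Your reformulation and lower-bound argument are correct and coincide with the paper's proof essentially line for line; the upper bound $n$ for $\alpha>\tfrac12$ is likewise the same triviality.

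The genuine gap is in part~(ii), the upper bound $\alpha n+\mathcal{O}(1)$ for $\alpha\le\tfrac12$. You correctly identify that a random subset of density~$\alpha$ fails the uniform condition $d_S(v)\ge\alpha d(v)$ at many low-degree vertices, but you do not actually close this gap: the ``$\alpha$-core'' deletion can cascade and strip away a linear number of vertices (nothing prevents a long chain of near-threshold failures), and the ``$\mathcal{O}(1)$-size gadgets'' idea is left entirely unspecified. As written, there is no argument that any repair keeps the overhead bounded, so the proposal does not establish the bound.

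The paper bypasses the probabilistic route altogether with a short deterministic argument. Writing $\alpha=1/c$, partition $V$ into $c$ parts of nearly equal size so as to \emph{minimize the number of edges between parts}. If some $v$ in the largest part $V_{\max}$ had $d_{V_{\max}}(v)<d(v)/c$, then $d_{V\setminus V_{\max}}(v)>(c-1)d(v)/c$, and by pigeonhole some other part $V'$ satisfies $d_{V'}(v)>d(v)/c$; moving $v$ to $V'$ strictly decreases the cut, contradicting minimality. Hence every $v\in V_{\max}$ has $d_{V_{\max}}(v)\ge\alpha d(v)$, so $V_{\max}$ is stable, and $|V_{\max}|\le n/c+\mathcal{O}(1)=\alpha n+\mathcal{O}(1)$. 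This min-cut/local-swap idea replaces your concentration-plus-repair program with a one-paragraph extremal argument and is the missing ingredient in your proposal.
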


\begin{proof}
To prove the lower bound of $\lceil\frac{1}{1-\alpha}\rceil$, let the node set $S\subseteq V$ be a stable set in two-way $\alpha$-BP. Since $G$ is connected, there is a node $v\in S$ that shares at least one edge with $V\setminus S$, which implies that $d(v)\ge d_S(v)+1$. Furthermore, $d_S(v)\ge \alpha d(v)$ because $S$ is stable. Hence, $d_S(v)\ge \alpha (d_S(v)+1)$ which yields $d_S(v)\ge \frac{\alpha}{1-\alpha}$. Moreover, $|S|-1\ge d_S(v)$, which implies that $|S|\ge \frac{\alpha}{1-\alpha}+1=\frac{1}{1-\alpha}$. As $|S|$ is a positive integer, $|S|\ge \lceil\frac{1}{1-\alpha}\rceil$.

Let us prove the upper bound of $\alpha n+\mathcal{O}(1)$. For simplicity assume that $\alpha=1/c$ for some positive integer $c$ (the general case can be proven analogously). Partition the node set $V$ into subsets $V_1,\cdots,V_c$ such that each $V_i$ for $1\le i\le c$ includes at least $\lfloor n/c\rfloor-1$ nodes and the number of edges between $V_i$s is minimized. Since each subset contains at least $\lfloor n/c\rfloor-1$ nodes, the largest subset, which we denote by $V_{\text{max}}$, has at most $n/c+2c=\alpha n+\mathcal{O}(1)$ nodes. We claim that for each node $v\in V_{\text{max}}$, $d_{V_{\text{max}}}(v)\ge \frac{d(v)}{c}=\alpha d(v)$, which implies that $V_{\text{max}}$ is a stable set. Assume that there is a node $u\in V_{\text{max}}$ which violates this property. Thus, $d_{V\setminus V_{\text{max}}}(u)>(1-\frac{1}{c})d(u)=\frac{c-1}{c}d(u)$. By  the pigeonhole principle, there must exist a subset $V'$ among the $c-1$ remaining subsets such that $d_{V'}(u)>\frac{1}{c}d(u)$. This is a contradiction because by removing $u$ from $V_{\text{max}}$ and adding it into $V'$, the number of edges between the subsets decreases at least by one. Notice that the new subsets stratify our desired size constraints because $V_{\text{max}}$ is originally of size at least $\lfloor n/c\rfloor$ and by removing a node from it, its size will not be less than $\lfloor n/c\rfloor-1$. \qed
\end{proof}

\textbf{Tightness.} The bounds in Theorem~\ref{thm7} are tight, up to an additive constant. For th lower bound, consider graph $G=(V_1\cup V_2,E)$, where $|V_1|=n-\lceil \frac{1}{1-\alpha}\rceil$ and $|V_2|=\lceil \frac{1}{1-\alpha}\rceil$. Assume that the nodes in $V_2$ create a clique, the nodes in $V_1$ create an arbitrary connected graph, and finally there is an edge between node $v\in V_2$ and node $v'\in V_1$. $G$ is an $n$-node connected graph. $V_2$ is a stable set for $G$ in two-way $\alpha$-BP because firstly for each node $u\in V_2\setminus\{v\}$, $d_{V_2}(u)=d(u)\ge \alpha d(u)$. In addition for node $v$,
\[
(1-\alpha)\lceil \frac{1}{1-\alpha}\rceil\ge 1\Rightarrow \lceil \frac{1}{1-\alpha}\rceil-1\ge \alpha \lceil \frac{1}{1-\alpha}\rceil\Rightarrow d_{V_2}(v)\ge \alpha d(v)
\] 
where we used $d_{V_2}(v)=\lceil \frac{1}{1-\alpha}\rceil-1$ and $d(v)=\lceil \frac{1}{1-\alpha}\rceil$. 

Our upper bounds are also tight because $\overleftarrow{MS}_{\alpha}(C_n)=n$ for $\alpha>\frac{1}{2}$ and $\overleftarrow{MS}_{\alpha}(K_n)\ge\alpha (n-1)$.

\textbf{Stable sets in two-way $r$-BP.} For a graph $G$, $\overleftarrow{MS}_r(G)=2$ for $r=1$ because two adjacent nodes create a stable set. For $r\ge 2$, we have tight bounds of $r+1\le\overleftarrow{MS}_r(G)\le n$. Notice if in a configuration less than $r+1$ nodes are black, in the next round all black nodes turn white. Furthermore, the lower bound of $r+1$ is tight for $K_n$ and the upper bound is tight for $r$-regular graphs.

\textbf{Immortal sets in two-way $\alpha$-BP.} For a graph $G$, $1\le\overleftarrow{MI}_{\alpha}(G)\le n$ for $\alpha>\frac{1}{2}$ and $1\le\overleftarrow{MI}_{\alpha}(G)\le \alpha n+\mathcal{O}(1)$ for $\alpha\le\frac{1}{2}$. All bounds are trivial except $\alpha n+\mathcal{O}(1)$, which is a corollary of Theorem~\ref{thm7} (note that stability implies immortality). The lower bound of 1 is tight since the internal node of the star graph $S_n$ is an immortal set of size 1. Regarding the tightness of the upper bounds, we have $\overleftarrow{MI}_{\alpha}(K_n)\ge \alpha(n-1)$ for $\alpha\le 1/2$ and $\overleftarrow{MI}_{\alpha}(C_n)=n$ for $\alpha>1/2$ and odd $n$ (this basically follows from the proof of Observation~\ref{obs1} by replacing black with white and $\alpha\le\frac{1}{2}$ with $\alpha>\frac{1}{2}$). 

\textbf{Immortal sets in two-way $r$-BP.} We provide tight bounds on $\overleftarrow{MI}_r(G)$ in Theorem~\ref{thm8}. Interestingly, the parity of $n$ plays a key role concerning the minimum size of an immortal set for $r=2$.
\begin{theorem}
\label{thm8}
For a graph $G=(V,E)$, 

(i) if $r=1$, $\overleftarrow{MI}_r(G)=1$

(ii) if $r=2$, $2\le \overleftarrow{MI}_r(G)\le \frac{n}{1+x}$, where $x=0$ for odd $n$ and $x=1$ for even $n$.

(iii) if $r\ge 3$, $r\le \overleftarrow{MI}_r(G)\le n$.
\end{theorem}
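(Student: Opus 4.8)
The plan is to prove the three lower bounds by a single uniform argument, dispose of all the $n$-sized upper bounds trivially, and reserve the real work for the bound $\overleftarrow{MI}_2(G)\le n/2$ in the even case of part (ii).

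\emph{Lower bounds and the trivial upper bounds.} I would first record the observation underlying every lower bound: if an initially-black set $I$ has $|I|<r$, then every node $v$ satisfies $|\Gamma_b^{\mathcal{C}_0}(v)|\le|I|<r$, so $\mathcal{C}_1|_V=w$ and the black colour dies in one round. Hence $\overleftarrow{MI}_r(G)\ge r$ for all $r$, which (since an immortal set is non-empty) gives the lower bounds $1,2,r$ claimed in (i),(ii),(iii). For part (i) with $r=1$ a single node $v$ is immortal: $G$ is connected so $v$ has a neighbour, and in two-way $1$-BP every neighbour of a black node is black in the next round, so the black colour never vanishes; thus $\overleftarrow{MI}_1(G)=1$. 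For the matching upper bound $n$ in (iii) and in the odd-$n$ case of (ii), note that $V$ itself is stable (hence immortal): once all nodes are black, each node sees $d(v)\ge\delta\ge r$ black neighbours and stays black.

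\emph{Two robust immortal gadgets (for $r=2$, even $n$).} The heart of the argument is $\overleftarrow{MI}_2(G)\le n/2$. Since $\delta\ge2$, the graph $G$ contains a cycle, and I would build an immortal set from one. First, for \emph{any} cycle $C$, colouring all of $C$ black is stable, because each node of $C$ retains its two cycle-neighbours as black neighbours, i.e. at least $2=r$; this is an immortal set of size $|C|$. Second, and crucially, if $C=c_0c_1\cdots c_{2k-1}c_0$ is an \emph{even} cycle, I claim that $I=\{c_i : i \text{ even}\}$, of size $k=|C|/2$, is immortal even inside the larger graph $G$. Starting from $\mathcal{C}_0|_I=b$, I would show by induction that in round $t$ every $c_i$ with $i\equiv t\pmod 2$ is black: such a node has cycle-neighbours $c_{i-1},c_{i+1}$ with indices $\equiv t-1$, which are black in round $t-1$, so it receives at least two black neighbours regardless of chords of $C$ or of the colours outside $C$. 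Thus at least $k\ge1$ nodes are black in every round, so $I$ is immortal. An even cycle therefore yields size $|C|/2\le n/2$, and an odd cycle yields (via the first gadget) size $|C|$, which is $\le n/2$ provided $|C|\le n/2$.

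\emph{Producing a cheap enough cycle.} It remains to guarantee, for even $n$ and $\delta\ge2$, either an even cycle of any length or an odd cycle of length at most $n/2$. If $G$ has an even cycle we finish by the second gadget. Otherwise $G$ has no even cycle, so every block of $G$ is a single edge or an odd cycle; since $\delta\ge2$ rules out leaf edges, every leaf block of the block-cut tree is an odd cycle. A single-block graph would be one odd cycle with $n$ odd, contradicting the parity, so there are at least two cycle-blocks. Using the fact that pairwise-intersecting blocks of a cactus share a common cut vertex, together with the parity of $n$ (a family of odd cycles through one common vertex has an odd number of vertices), I would argue that for even $n$ the cycles cannot all pairwise intersect, so $G$ contains two vertex-disjoint odd cycles; the shorter of these has length at most $n/2$, and the first gadget completes part (ii). I expect this last structural step, characterising the even-cycle-free graphs and using the parity of $n$ to force two disjoint odd cycles, to be the main obstacle, since the two gadgets themselves are routine once stated.
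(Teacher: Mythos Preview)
Your lower bounds, the trivial upper bounds, and the two gadgets for $r=2$ (a whole cycle is stable; the alternating half of an even cycle is immortal) match the paper exactly. The divergence is in how a usable cycle is produced for even $n$. The paper does not go through the block decomposition at all: it takes a \emph{longest} cycle $C$, notes that if $|C|$ is even or $|C|\le n/2$ one of the gadgets finishes immediately, and in the remaining case ($|C|$ odd and $>n/2$, hence $V\setminus V(C)\ne\emptyset$) it runs a greedy walk from a vertex $w_1\notin V(C)$ adjacent to some $u\in V(C)$, always stepping to a neighbour other than the predecessor (possible since $\delta\ge 2$). The walk either re-enters $C$ at some vertex $\ne u$, in which case the resulting ear on the odd cycle $C$ produces an even cycle, or it closes up using only vertices of $V\setminus V(C)$ (possibly together with $u$), giving a cycle of length at most $|V\setminus V(C)|+1\le n/2$. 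This is more elementary than your cactus route and sidesteps the Helly property and the bridge analysis. Your approach is also sound, but note that the parenthetical parity count (``a family of odd cycles through one common vertex has an odd number of vertices'') tacitly assumes there are no bridge-blocks and no vertices outside the cycle-blocks; with $\delta\ge 2$ this is indeed forced (a bridge would separate $G$ into two pieces each containing a cycle-block, and these could not then share the common cut vertex), and establishing that is precisely the obstacle you anticipate.
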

\begin{proof}
For $r=1$, any node set of size 1 is immortal because if a node $v\in V$ is black in some configuration in two-way $1$-BP, in the next round all nodes in $\Gamma(v)$ will be black.

The lower bound of $r$ is trivial because a configuration with less than $r$ black nodes in two-way $r$-BP gets fully white in the next round. 

All the upper bounds are also trivial, except the bound $n/2$. Assume $n$ is even and $r=2$; we prove that $G$ has an immortal set of size at most $n/2$. Let
$C_k:=u_1,u_2,\cdots, u_i,u_{i+1},\cdots, u_k,u_1$ 
be a cycle of length $k$ in $G$, then the node set $U:=\{u_1,\cdots,u_k\}$ is a stable set, and consequently an immortal set. This is true because each node has $r=2$ neighbors in the set; that is if all nodes in $U$ are initially black, they stay black forever. If $k$ is even, then actually the smaller set $U_{\text{e}}:=\{u_i\in U:i\ \text{is even} \}$ is an immortal set. This is correct because in two-way $2$-BP if for some configuration all nodes in $U_{\text{e}}$ are black, in the next round all nodes in $U\setminus U_{\text{e}}$ will be black, irrespective of the color of the other nodes. One round later, all nodes in $U_{\text{e}}$ will be black again and so on. 

Now, let $C$ of length $k$ be a longest cycle in $G$. If $k$ is even then, as discussed above, half of the nodes in $C$ suffice to create an immortal set. If $k\le n/2$, then the nodes on $C$ create an immortal set. In both cases there is an immortal set of size at most $n/2$. Thus, assume otherwise, i.e., $k$ is larger than $n/2$ and odd. Let the node set $V_1$ include all nodes in $C$ and $V_2:=V\setminus V_2$. Since the graph is connected and $V_2\ne \emptyset$ (we already excluded the case of $k= n$), there is a node $w_1\in V_2$  and a node $u\in V_1$ such that $\{w_1,u\}\in E$. We start traversing from $w_1$. Since $r=2$, each node, including $w_1$, is of degree at least $2$ (recall that we always assume that $\delta(G)\ge r$, see Section~\ref{pre}). Thus, $w_1$ in addition to $u$ has at least another neighbor, say $w_2$. Since $w_2$ is of degree at least $2$ as well, it must be adjacent to another node, say $w_3$, and so on. Assume that $w_{k'}$ for some $k'\ge 2$ is the first node from $V_1$, which we visit during the traversing. The union of path $u, w_1, w_2,\cdots, w_{k'}$ and cycle $C$ gives us two new cycles if $u\ne w_{k'}$. Since the length of cycle $C$ is odd, one of these two new cycles must be of even length (see Fig.~\ref{fig1}). This even cycle provides us with an immortal set of size $n/2$. If $u=w_{k'}$ then we have a cycle which includes only nodes from $V_2$ plus $u$. This cycle is of size at most $n/2$ because $|V_2|=|V|-|V_1|=n-k< n/2$ by applying $k>n/2$. This gives an immortal set of size at most $n/2$. If we never visit a node from $V_1$, then we eventually revisit a node $w_j\in V_2$ which gives us a cycle on some nodes in $V_2$. Notice this cycle is of size at most $n/2$ since $|V_2|<n/2$.

\begin{figure}[h]
\centering
\includegraphics{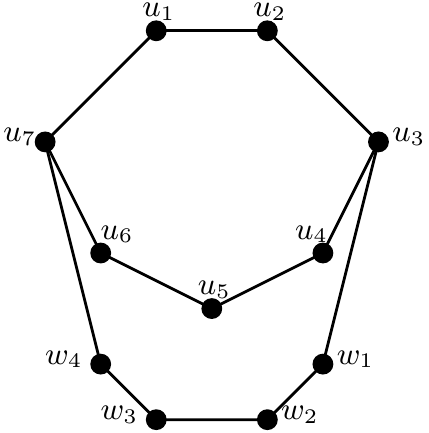}
\caption{By connecting each of the endpoints of a path into a different node on a cycle, two new cycles are generated. If the original cycle is of odd size, here $7$, then one of the two new cycles is of even size, here the cycle $u_1,u_2,u_3,w_1,\cdots,u_7,u_1$}
\label{fig1}
\end{figure}
 \qed
\end{proof}

\textbf{Tightness.} The lower bounds are tight because $\overleftarrow{MI}_r(K_n)=r$ for $r\ge 1$. The upper bounds for $r=2$ are tight because $\overleftarrow{MI}_2(C_n)$ is equal to $n$ for odd $n$ and it is equal to $n/2$ for even $n$. 

Regarding the tightness of the upper bound of $n$ for $r\ge 3$, we prove that for any sufficiently large $n$, there is an $n$-node graph $G=(V,E)$ which has no immortal set of size smaller than $n-2r-1$. Let $n$ be even; we will discuss how our argument applies to the odd case. We first present the construction of graph $G$ step by step. For $1\le i\le K:=\lfloor n/(r+1)\rfloor-1$, let $G_i$ be the clique on the node set $V_i:=\{v_i^{(j)}:1\leq j\leq r+1\}$ minus the edge $\{v_i^{(1)},v_i^{(2)}\}$. To create the first part of graph $G$, we connect $G_i$s with a path. More precisely, we add the edge set $\{\{v_i^{(2)},v_{i+1}^{(1)}\}: 1\le i\le K-1\}$. So far the generated graph is $r$-regular except the nodes $v_1^{(1)}$ and $v_K^{(2)}$ which are of degree $r-1$ and we have $\ell:=n-K(r+1)$ nodes left. Let $G'$ be an arbitrary $r$-regular graph on $\ell$ nodes. Now, remove an edge $\{v',u'\}$ from $G'$ and connect $v'$ to $v_1^{(1)}$ and $u'$ to $v_{K}^{(2)}$. Clearly the resulting graph $G$ is $r$-regular with $n$ nodes (see Fig.~\ref{fig2} for an example). However, we should discuss that such a graph $G'$ exists. The necessary and sufficient condition for the existence of an $r$-regular graph on $\ell$ nodes is that $\ell\ge r+1$ and $r\ell$ is even. Firstly, $\ell=n-(\lfloor n/(r+1)\rfloor-1)(r+1)\ge r+1$. Furthermore, if $r$ is even, then $r\ell$ is even; thus, assume otherwise. Since $r$ is odd and $n$ is even, $rn$ is even. In addition, $r(r+1)$ is even, which implies that $r(r+1)(\lfloor n/(r+1)\rfloor-1)$ is even. Overall, $r\ell=rn-r(r+1)(\lfloor n/(r+1)\rfloor-1)$ is even. 

\begin{figure}[h]
\centering
\includegraphics{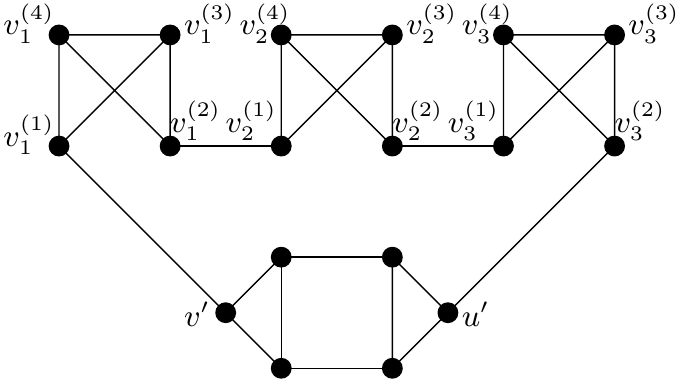}
\caption{The construction of graph $G$ for $r=3$ and $n=18$.}
\label{fig2}
\end{figure}
Now, we claim that all nodes in $V_i$ for $1\le i\le K$ must be in any immortal set. This implies that the minimum size of an immortal set is at least $K(r+1)=(\lfloor n/(r+1)\rfloor-1)(r+1)\ge n-2r-1$. Assume that we start from a configuration in which all nodes are black except a node $v$ in one of $V_i$s. If $v$ is $v_i^{(1)}$ or $v_i^{(2)}$, then in the next round $v_i^{(3)}$ and $v_i^{(4)}$ (which must exist because $r\ge 3$) are both white since each of them has at most $r-1$ black neighbors. By construction, there is an edge between $v_i^{(3)}$ and $v_i^{(4)}$. Since $G$ is $r$-regular, in the next round they both stay white and all their neighbors get white as well. By an inductive argument in the $t$-th round for $t\ge 1$, all nodes whose distance is at most $t$ from $v_i^{(3)}$ (or similarly $v_i^{(4)}$) will be white. Thus, eventually the graph is going to be fully white. Now, assume that $v$ is a node in $V_i\setminus\{v_i^{(1)},v_i^{(2)}\}$. Again, in the next round $v_i^{(1)}$ and one round after that $v_i^{(3)}$ and $v_i^{(4)}$ get white and the same argument follows. If $n$ is odd, we do the same construction for $n-1$ instead of $n$ and at the end, add a node $w$ and connect it to $v_{i}^{(2)}$ for $1\le i\le r$. This graph is not $r$-regular since there are $r$ nodes of degree $r+1$. However, a similar argument applies since from any configuration with two adjacent white nodes in one of $V_i$s, the whole graph gets fully white eventually. 

\bibliographystyle{splncs03}
\bibliography{refer}
\end{document}